\documentclass[journal,12pt, draftclsnofoot, onecolumn,romanappendices]{IEEEtran}

\usepackage{bbm}
\usepackage{cite}
\usepackage{graphicx}  
\usepackage{caption}
\usepackage{subcaption}
\DeclareCaptionFont{tiny}{\tiny}
\usepackage{epstopdf}
\usepackage{amssymb,times}
\usepackage{centernot}

\usepackage{algorithm}
\usepackage{algorithmic}

\usepackage{multirow}
\usepackage{psfrag}

%
\usepackage[cmex10]{amsmath}
\usepackage{amssymb}
\usepackage{color}
\usepackage{amsthm}

%
\def\tr{\mathrm{Tr}}

\def\diag{\mathrm{diag}}

\newtheorem{theorem}{Theorem}

\newtheorem{proposition}{Proposition}

\usepackage[dvipsnames,svgnames]{xcolor}
\usepackage[textwidth=30mm]{todonotes}

\setlength{\intextsep}{0pt plus0pt minus4pt}

\setlength{\textfloatsep}{4pt plus2pt minus2pt}
\usepackage{xparse}
\DeclareDocumentCommand\cc{ g g }{%
	\mathbb{C} \IfNoValueF {#1}    {^{#1%
			\IfNoValueF {#2} { \times #2}%
	}}%
}

\DeclareDocumentCommand\rr{ g g }{%
	\mathbb{R} \IfNoValueF {#1}    {^{#1%
		\IfNoValueF {#2} { \times #2}%
	}}%
}

\DeclareDocumentCommand\set{ m g g}{%
\{#1\}_{1\leq #2  	\IfNoValueF {#3} { \leq #3}}
}

\newcommand{\sbf}[1]{\boldsymbol{#1}}
\renewcommand{\bf}[1]{\mathbf {#1}}





\newcommand{\be}{\begin{equation}}
\newcommand{\ee}{\end{equation}}
\newcommand{\ba}{\begin{array}}
\newcommand{\ea}{\end{array}}
\newcommand{\bea}{\begin{eqnarray}}
\newcommand{\eea}{\end{eqnarray}}
\newcommand{\herm}{^{\mbox{\scriptsize H}}}

\newcommand{\tran}{^{\mbox{\scriptsize T}}}

\newcommand{\vbar}{\raisebox{.17ex}{\rule{.04em}{1.35ex}}}
\newcommand{\vbarind}{\raisebox{.01ex}{\rule{.04em}{1.1ex}}}
\newcommand{\R}{\ifmmode {\rm I}\hspace{-.2em}{\rm R} \else ${\rm I}\hspace{-.2em}{\rm R}$ \fi}
\newcommand{\T}{\ifmmode {\rm I}\hspace{-.2em}{\rm T} \else ${\rm I}\hspace{-.2em}{\rm T}$ \fi}
\newcommand{\N}{\ifmmode {\rm I}\hspace{-.2em}{\rm N} \else \mbox{${\rm I}\hspace{-.2em}{\rm N}$} \fi}
\newcommand{\B}{\ifmmode {\rm I}\hspace{-.2em}{\rm B} \else \mbox{${\rm I}\hspace{-.2em}{\rm B}$} \fi}
\newcommand{\Hil}{\ifmmode {\rm I}\hspace{-.2em}{\rm H} \else \mbox{${\rm I}\hspace{-.2em}{\rm H}$} \fi}
\newcommand{\C}{\ifmmode \hspace{.2em}\vbar\hspace{-.31em}{\rm C} \else \mbox{$\hspace{.2em}\vbar\hspace{-.31em}{\rm C}$} \fi}
\newcommand{\Cind}{\ifmmode \hspace{.2em}\vbarind\hspace{-.25em}{\rm C} \else \mbox{$\hspace{.2em}\vbarind\hspace{-.25em}{\rm C}$} \fi}
\newcommand{\Q}{\ifmmode \hspace{.2em}\vbar\hspace{-.31em}{\rm Q} \else \mbox{$\hspace{.2em}\vbar\hspace{-.31em}{\rm Q}$} \fi}
\newcommand{\Z}{\ifmmode {\rm Z}\hspace{-.28em}{\rm Z} \else ${\rm Z}\hspace{-.28em}{\rm Z}$ \fi}


\newcommand{\var}{\mbox {var}}

\renewcommand{\vec}[1]{\bf{#1}}     




\begin{document}


\title{
Capacity Approaching Low Density Spreading in Uplink NOMA via Asymptotic Analysis
	}

\author{Hossein Asgharimoghaddam \IEEEmembership{Student Member,~IEEE}, Jarkko Kaleva, \IEEEmembership{Member,~IEEE}, and  Antti T\"olli, \IEEEmembership{Senior~Member,~IEEE}
\thanks{ \newline \indent H. Asgharimoghaddam and A. T\"olli are with the Centre for Wireless Communications, University of Oulu, Oulu, Finland (Firstname.Lastname@oulu.fi). J. Kaleva (jarkko.kaleva@solmutech.com) is with the Solmu Technologies Oy. 
\newline\indent This work has been supported in part by the Academy of Finland 6Genesis Flagship (grant no. 318927).  
\newline \indent A preliminary version of this paper has been presented at the IEEE International Symposium on Information Theory (ISIT) 2020 Los Angeles, California, USA.
}
}

\maketitle
\vspace{-2cm}
\begin{abstract}
Low-density spreading non-orthogonal multiple-access (LDS-NOMA) is considered where $K$ single-antenna user-equipments (UEs) communicate with a base-station (BS) over $F$  fading sub-carriers. Each UE $k$ spreads its data symbols over $d_k<F$  sub-carriers. We aim to identify the LDS-code allocations that maximize the ergodic mutual information (EMI). The BS assigns resources solely based on pathlosses.  Conducting analysis in the regime where $F$, $K$, and ${d_k,\forall k}$ converge to $+\infty$ at the same rate, we present EMI as a deterministic equivalent plus a residual term. The deterministic equivalent is a function of pathlosses and spreading codes, and the small residual term scales as $\mathcal{O}(\frac{1}{\min(d_k^2)})$. We formulate an optimization problem to get the set  of all spreading codes, irrespective of sparsity constraints, which maximize the deterministic EMI. This yields a simple resource allocation rule that facilitates the construction of desired LDS-codes via an efficient partitioning algorithm. The acquired LDS-codes additionally harness the small incremental gain inherent in the residual term, and thus, attain near-optimal values of EMI in the finite regime. While regular LDS-NOMA is  found to be asymptotically optimal in symmetric models, an  irregular spreading arises in generic asymmetric cases. The spectral efficiency enhancement relative to regular and random spreading is validated numerically.
\end{abstract}


\IEEEpeerreviewmaketitle

\section{Introduction}
The objective in 5G and beyond network is to move on from merely connecting people to fully realizing the internet of things (IoT) and the fourth industrial revolution~\cite{5GEvol19NOKIA}. This has resulted in a shift towards techniques that both increase the spectral efficiency and support massive connectivity. Non-orthogonal multiple access (NOMA) as such a technique allows multiple user equipments (UEs) to share the same signal dimension via power domain (PD) or code domain (CD) multiplexing. PD-NOMA distinguishes UE's signals via superposition decoding principle and by exploiting the signal to interference plus noise ratio (SINR) difference among UEs. In CD-NOMA, including  low-density spreading (LDS) multiple access~\cite{hoshyar2008novel}, sparse code multiple access (SCMA)~\cite{SCMA14}, multi-user shared access (MUSA)~\cite{MUSIot16}, distinctive codes are assigned to UEs as in code division multiple access (CDMA) system~\cite{NOMAsurveyDing17,NOMAreviewDaiWang2015}.
The focus of this work is on LDS-NOMA, which employs an LDS code comprising a small number of $d_k$ non-zero elements for spreading UE $k$'s symbol over a number of $F$ shared radio resources~\cite{hoshyar2008novel,MCNomaUp14,LDSOFDM-Hoshyar10,LDSOFDMRazavi12}.
 Link level aspects of LDS-NOMA, e.g. bit-error-rate performance~\cite{hoshyar2008novel,LDSOFDM-Hoshyar10,LDSOFDMRazavi12,SCMA14} and envelop fluctuation~\cite{PeakToAvgImran10}, have been well-studied in the literature. However, the theoretical results about the boundaries of the achievable rates in LDS-NOMA is rather limited~\cite{ITLDSsurveyFerrant18}.
Here, a multi-carrier LDS-NOMA scheme with spreading in frequency domain~\cite{LDSOFDM-Hoshyar10,LDSReviewImran12} is considered. 
The objective is to characterize the boundaries of the ergodic sum rates at which UEs can jointly transmit reliably, and 
identify the LDS-code allocation policies that closely attain these boundaries.
Such analysis provides insight for system design~\cite{LDSReviewImran12,NOMAsurveyDing17}, and hence, is instrumental for the optimal utilization of scarce radio~resources.

\subsection{Prior related works}
The spectral efficiency (SE) analysis of LDS-NOMA with spreading in time domain has been considered in~\cite{shitz2017,zaidelBenj18,gerrante15SE,4036396,1633802,ITLDSsurveyFerrant18} under a symmetric AWGN channel model. These works compare the SE limits in the structured regular LDS codes and in the randomly generated irregular ones. 
The sparse mapping between numbers of $K$ UEs and $F$ resources in LDS-NOMA is called regular when each UE occupies a number of $d_k=d,\forall k$ resources and each resource is used by a number of $\frac{K}{F}d$ UEs; or irregular otherwise~\cite{shitz2017}. 
The irregular schemes with $d_k$ being randomly Poissonian distributed with fixed mean~\cite{4036396,1633802}, and randomly uniformly distributed~\cite{gerrante15SE} are studied using replica method~\cite{7079688} and random matrix framework developed in~\cite{verdu1999spectral}, respectively.
The regular scheme is considered in~\cite{shitz2017,zaidelBenj18}, where in~\cite{zaidelBenj18} a closed-form approximation is given for the SE limit. These analyses indicate that the regular codes, in symmetric AWGN channel, yield superior SE as compared to the irregular and the dense spreading (the case with $d_k=F, \forall k$) schemes.  
The aforementioned works rely on the analysis of random matrices in the large system regime~\cite{7079688,RMT}, where $F$ grows large with a fixed ratio of $K/F$. Such analysis yields rather accurate approximations in the finite regime that become arbitrarily tight as $F$ grows large.
However, since  the mathematical literature studying the limiting behavior of sparse random matrices is distinctly smaller than that for non-sparse ones~\cite{wood2012}, the extension of such SE analysis to more generic settings is rather difficult. The only large system analysis on the SE limit in presence of fading is considered in~\cite{FerranteFadingLDS18} 
in a special setting with $d_k=1, \forall k$. 
An information
theoretic analysis of LDS-NOMA with fading also appears in~\cite{RazaviHoshyarInfoLDS11}, wherein the ergodic sum-rate of random spreading has been evaluated numerically.
The application of large system analysis for studying the SE limits of spread spectrum system has been motivated in the pioneer works such as~\cite{MullerTulino04a,verdu1999spectral,TseZeitLinearCdma99,TseHanlyLinearCDMA99,GrantAlexOptimumCDMArandom98,GrantAlexRandoopticDMA96,shamaiverduCDMARand01}.
These works characterize the spectral efficiency of random dense-spreading ($d_k=F,\forall k$) CDMA with linear~\cite{TseZeitLinearCdma99,TseHanlyLinearCDMA99} and/or optimal~\cite{verdu1999spectral,GrantAlexOptimumCDMArandom98,GrantAlexRandoopticDMA96,shamaiverduCDMARand01, MullerTulino04a} receivers.
A common conclusion in a number of the aforementioned works is that  the random dense-spreading CDMA incurs negligible spectral efficiency loss relative to the optimum if an optimal receiver is used and the number of UEs per chip is sufficiently large~\cite{verdu1999spectral,GrantAlexOptimumCDMArandom98}.

\subsection{Contributions}
In this paper, a multi-carrier LDS-NOMA scheme with spreading in frequency domain~\cite{LDSOFDM-Hoshyar10,LDSReviewImran12} is considered. Different from the aforementioned works, the UEs are allowed to have distinct pathloss values, and frequency and time selective fading is assumed on the sub-carriers.
Also, instead of assuming a particular sparse mapping, we consider $d_k,\forall k$ as design parameters, and identify the LDS-code allocation policies that closely attain the maximum of
the ergodic mutual information (EMI). A key feature of these policies is that they assign the codes only based  on the pathloss values.
Conducting analysis in the large system limits where $F$, $K$, and ${d_k,\forall k}$ converge to $+\infty$ at the same rate,
we present EMI as a deterministic equivalent plus a residual term. The deterministic equivalent is given as a function of pathloss values and LDS-codes, and the small residual term is shown to quickly vanish inversely proportional to $d^2$ where $d=\min\{d_k,\forall k\}$.
First, we formulate an optimization problem to get the set  of all spreading codes, irrespective of the sparsity constraints, which maximize the deterministic equivalent of EMI. This yields a simple resource allocation rule that facilitates the construction of the desired sparse spreading codes via an efficient partitioning algorithm.
The analysis in the finite regime shows that the acquired sparse solutions additionally harness the small incremental gain inherent in the residual term, and thus, attain near-optimal values of the EMI in the finite regime.
It is observed that the regular spreading matrices are asymptotically optimal for the symmetric scenarios with the same pathlosses and power constraints for all UEs. However, in the generic asymmetric scenarios, an irregular structure might arise.
Numerical  simulations  validate  the  attainable spectral efficiency enhancement as compared to the random and the regular spreading schemes.

Parts of this paper have been published in the conference
publication~\cite{HosISIT020}. Specifically, the large system analysis and the optimization approach are sketched therein,
while the precise proofs along with the derivation details
are presented in the current work. In addition, the numerical analysis is extended to visualize the resource allocation policy proposed herein.

The remainder of this work is organized as follows. In
Section~\ref{sec: Problem statement}, the network
model and the problem formulation are given.
The proposed optimization approach based on large system analysis is presented in Section~\ref{sec: large system analysis}. The maximization of the deterministic EMI is considered in  Section~\ref{sec: opt code asymp}, and the gap to the optimum is characterized in Section~\ref{sec:asymp in finit}.
The algorithmic solution for the LDS codes assignment is presented in Section~\ref{sec:Alg max}. The numerical results are given in Section~\ref{sec:num res}. Conclusions are drawn in Section~\ref{sec:Conclusion} while all the proofs are presented in the Appendices.

\section{Problem Statement}
\label{sec: Problem statement}
\subsection{General Notations}
The following notations are used throughout the manuscript. All boldface letters indicate
vectors (lower case) or matrices (upper case). Superscripts
$(\cdot)\tran$, $(\cdot)\herm$,$(\cdot)^*$ ,$(\cdot)^{-1}$, $(\cdot)^{1/2}$
stand for transpose, Hermitian transpose, conjugate operator, matrix inversion and
positive semidefinite square root, respectively. We use $\mathbb{C}^{m
	\times n}$ and $\mathbb{R}^{m
	\times n}$ to denote the set of $m \times n$ complex and the real valued matrices, respectively.
 $\diag( \cdots)$ denotes a diagonal matrix with
elements $(\cdots)$ on the main diagonal. The $(i,j)^{\text{th}}$ element of a matrix $\bf A$ is denoted by $[{\bf A}]_{i,j}$ or $a_{i,j}$. 
The sets are indicated by
calligraphic letters. The  cardinality
of a set $\mathcal{A}$ is denoted by $|\mathcal{A}|$, and $\mathcal{A}\backslash k
$ is used to exclude the index $k$ from the set.
${\mathbb{E}}\{\cdot\}$,
${\tr\{\cdot\}}$ denote  statistical
expectation and  trace operator,
respectively.  Euclidean (spectral) norm for vectors (matrices) are denoted by $\|\cdot\|$. The notation $|.|$ is used to denote both the absolute value for a complex scalar, and the determinant for a square matrix. Notation $x_F=\mathcal{O}(\alpha_F)$ represents inequality $|x_F|\leq C \alpha_F$ as $F\rightarrow \infty$ with $C$ being a generic constant independent of system size $F$.

\subsection{System Model}
\label{sec:sys model}
   \begin{figure}[tpb]
	\begin{center}
		\includegraphics[clip, trim=0cm 0.0cm 0cm 0cm, width=\columnwidth]{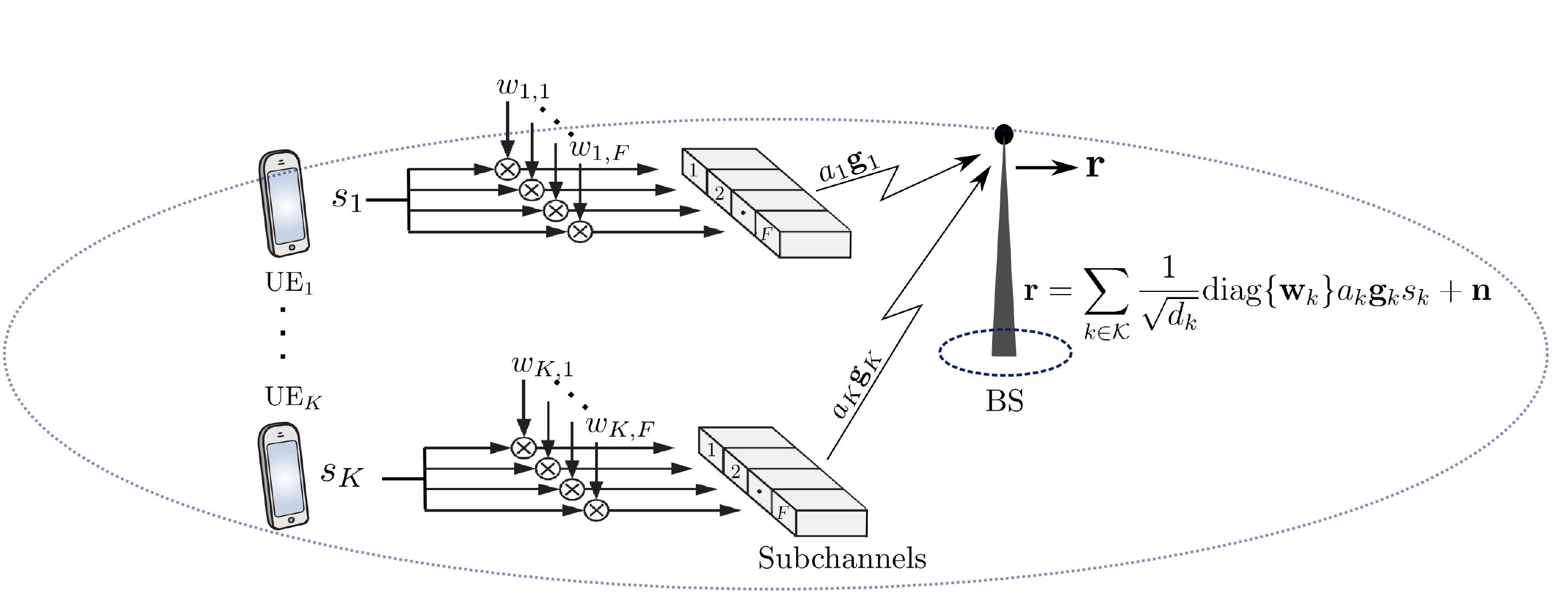}
	\end{center}\vspace{-0.5cm}
	\caption{The system model illustration.
	}
	\label{fig:sysmdl}
\end{figure}
Consider an uplink multi-carrier system with $K$ single-antenna UEs transmitting to a base station (BS) on a common frequency band. The set of UEs' indices is denoted by $\mathcal{K}$ hereafter.
Exploiting the  OFDM technique, the total frequency band is divided into a set of narrow band sub-channels $\mathcal{F} = \{1,. . .,F\}$. Then, each UE $k$ spreads its data symbol in frequency direction using a low density spreading code ${\vec{w}}_k \in \rr{F}$. 
The code is a sparse vector consisting of $F$ chips with $d_k$, a small number, of non-zero values.
The UEs' codes are not restricted to be orthogonal, and thus, the signals of UEs transmitting on the same sub-channel will be superimposed. 
  The received signal vector ${\bf r}\in \cc{F}$ is given as 
  \begin{equation}
 \label{eq:resive vec}
 \mathbf{r}=\sum_{k\in \mathcal{K}} \frac{1}{\sqrt{d_k}}\mathrm{diag}\{{\bf w}_{k}\} a_k{\bf g}_k  s_k+{\bf n}
 \end{equation}
  where the $f^{\text{th}}$ element of the vector~$\bf r$ corresponds to the signal received within $f^{\text{th}}$ sub-channel.  The noise vector is given by ${\bf n}\sim\mathcal{CN}(0,\sigma^2 {\bf I}_{F})$. The unit variance symbol of $k^{\text{th}}$ UE is denoted by~$s_k$.
 The channel vector for UE $k$ is denoted by $a_k {\bf g}_k\in \cc{F}$, defined in more detail in the sequel. 
   The  vector ${\bf w}_k\in\rr{F}$ denotes the UE $k$'s spreading code, and is assumed to satisfy the transmit
   power condition
   \begin{equation}
   \label{eq:pwr constraint}
   \frac{1}{d_k}\sum_{f\in \mathcal{F}} {|w_{f,k}|^2}\leq P_k , \forall \, k\in\mathcal{K}
   \end{equation}
 where $P_k$ is the total power available at UE $k$.  The normalization by $d_k$ values in~\eqref{eq:resive vec} and~\eqref{eq:pwr constraint} is considered to make the sparsity assumption in the system model explicit.

\subsection{Channel Model}
\label{sec:ch mdl}
The channel matrix entries are generated based on the uncorrelated fading channel model utilized in the context of multi-carrier systems~\cite[Chapter 1]{Kaiserbook}.
This channel model is based on the assumption that 
the fading on the adjacent data symbols
after inverse OFDM and de-interleaving can be considered as uncorrelated~\cite{Kaiserbook}. This
assumption holds when, for example, 
time and frequency interleavers with sufficient
interleaving depth are applied. 
 Thus, the resulting complex-valued channel fading coefficient
is  generated independently for each sub-carrier and OFDM symbol. For a
propagation scenario  without line of sight, the fading amplitude is generated
by a Rayleigh distribution and the channel model is referred to as uncorrelated
Rayleigh fading channel~\cite{Kaiserbook}. In particular,  we present the channel vector for UE $k$ as $a_{k} {\mathbf{g}}_{k}$ where  $a^2_{k}$ includes the pathloss due to large scale fading, and the matrix ${\bf G}=[{\bf g}_1,...{\bf g}_K]$ represents the small-scale fading. 
The entries of ${\bf G}$ are independent complex Gaussian random variables. Each entry has zero-mean independent real and imaginary parts with variance of $\frac{1}{2}$.
In the following, we use ${\bf h}_k=\frac{a_k}{\sqrt{d_k}} \mathrm{diag}\{{\bf w}_{k}\} {\bf g}_k$ and ${\bf H}=[{\bf h}_1,...{\bf h}_K]$ to denote the equivalent channel vectors and matrix, respectively, including the spreading vectors.

\subsection{Ergodic Capacity of the Channel} 
Let ${\bf W} \in \mathbb{R}^{F\times K}$ to be the spreading matrix that contains all the spreading vectors $\{{\bf w}_k\}$ in its columns. We define the set of feasible sparse spreading matrices as  $\mathcal{C}_1=\{{\bf W}| w_{f,k}\in \mathbb{R}, \frac{1}{d_k}\sum_f |w_{f,k}|^2\leq P_k, \|{\bf w}_k\|_{0}=d_k,\,\forall k\in\mathcal{K}\}$ where $\|{\bf w}_k\|_{0}=d_k$ restricts the number of non-zero elements in ${\bf w}_k$ to be equal to $d_k$.
Our  interest  is  in the  scenario  where   the  BS  assigns  the  spreading  codes  based  on  the  UEs'  pathlosses, and the assigned spreading codes are
known to the UEs.
Implicit in this
model is the assumption that the channel statistics vary much more
slowly than the small-scale fading coefficients, so that the statistical properties of the channel can be assumed constant for a long period of communication~\cite{BjornsonLuca2016}. 
Given a perfect knowledge of fading coefficients at the BS side, the ergodic mutual information (EMI) between the transmitters and the receiver, for a spreading matrix  ${\bf W}\in \mathcal{C}_1$, is
\begin{equation}
\label{eq:Mutual info}
\begin{aligned}
{\mathrm J}({\bf W},\sigma^2 )=\frac{1}{F}\mathbbm{E}\log\big|{\bf I}_{F}+\frac{1}{\sigma ^2}{{\sum_{k\in \mathcal{K}}\frac{a_k^2}{ d_k}  \mathrm{diag}\{{\bf w}_{k}\} {\bf g}_k {\bf g}_k\herm  \mathrm{diag}\{{\bf w}_{k}\} }}\big|\,\\
\end{aligned}
\end{equation}
where the expectation is over the small-scale fading only.
 We are particularly interested in the ergodic capacity of the
channel, which is equal to the maximum of ${\rm J}({\bf W},\sigma^2)$ over the
set of  all the sparse spreading matrices in $\mathcal{C}_1$, i.e.,
\begin{equation}
\label{eq:optimizeMutualInfo}
{\rm C}_{\rm E}(\sigma^2)= \underset{{\vec W}\in \mathcal{C}_1}{\max} {\rm J}({\bf W},\sigma^2).
\end{equation}
The corresponding conventional  problem without sparsity constraints  is considered in~\cite{Rupf94} for symmetric additive white Gaussian noise (AWGN) channel model. It is shown in~\cite{Rupf94} that Welch-Bound-Equality (WBE) signature sequences achieve the sum-capacity of symmetric AWGN channel. The optimal spreading codes in asymmetric AWGN model is studied in~\cite{Viswanath99} via the concept of Majorization and Schur-concavity~\cite{Marshallbook79} of the sum-capacity with respect to eigenvalues of ${\bf H}{\bf H}\herm$. While AWGN channel capacity
  depends on the spreading codes
through their cross-correlations~\cite{Rupf94,Viswanath99}, the  transmission of the signal over the i.i.d Rayleigh fading channel destroys the orthogonality of the spreading codes~\cite{ReinhardtCDMAFading95}.
This, together with the hypotheses on the availability of only channel statistics, allow  the spreading codes to  be taken from the set of vectors $\{{\bf w}_k\}$ with positive values and independent of the small-scale fading gains, as stated in the following proposition,
\begin{proposition}
	\label{prop:sign}
	The expectation in~\eqref{eq:Mutual info} is invariant with respect to the signs of the entries in real valued spreading vectors ${\bf w}_k,\,\forall k\in \mathcal{K}$.
\end{proposition}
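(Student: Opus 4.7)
The plan is to show that flipping signs of entries in the spreading vectors produces a random matrix whose distribution is the same as the one under the original signs, so that the expectation (which depends only on the distribution) is unchanged.

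First, fix arbitrary sign patterns and encode them as diagonal $\pm 1$ matrices: for each $k\in\mathcal{K}$, let $\mathbf{S}_k\in\mathbb{R}^{F\times F}$ be diagonal with $s_{f,k}\in\{-1,+1\}$ on the diagonal, and let $\tilde{\mathbf{w}}_k=\mathbf{S}_k\mathbf{w}_k$. Because both $\mathbf{S}_k$ and $\mathrm{diag}\{\mathbf{w}_k\}$ are diagonal, they commute, and hence $\mathrm{diag}\{\tilde{\mathbf{w}}_k\}=\mathbf{S}_k\mathrm{diag}\{\mathbf{w}_k\}=\mathrm{diag}\{\mathbf{w}_k\}\mathbf{S}_k$. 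Using $\mathbf{S}_k\mathbf{S}_k=\mathbf{I}_F$, the rank-one contribution of UE $k$ after sign-flipping can be rewritten as
\begin{equation*}
\mathrm{diag}\{\tilde{\mathbf{w}}_k\}\,\mathbf{g}_k\mathbf{g}_k^{H}\,\mathrm{diag}\{\tilde{\mathbf{w}}_k\}
=\mathrm{diag}\{\mathbf{w}_k\}\,(\mathbf{S}_k\mathbf{g}_k)(\mathbf{S}_k\mathbf{g}_k)^{H}\,\mathrm{diag}\{\mathbf{w}_k\}.
\end{equation*}
Define $\tilde{\mathbf{g}}_k:=\mathbf{S}_k\mathbf{g}_k$ for each $k$.

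Second, invoke the distributional symmetry of the small-scale fading. Since $\mathbf{g}_k\sim\mathcal{CN}(\mathbf{0},\mathbf{I}_F)$ has i.i.d.\ entries whose law is invariant under multiplication by $\pm 1$, each $\tilde{\mathbf{g}}_k$ has the same distribution as $\mathbf{g}_k$. Moreover, the columns of $\mathbf{G}$ are independent, so applying independently chosen sign matrices $\mathbf{S}_1,\ldots,\mathbf{S}_K$ across UEs preserves both the marginals and the independence; that is, the collection $\{\tilde{\mathbf{g}}_k\}_{k\in\mathcal{K}}$ has the same joint distribution as $\{\mathbf{g}_k\}_{k\in\mathcal{K}}$.

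Third, conclude by substitution inside the expectation. The argument of the determinant in \eqref{eq:Mutual info}, evaluated at the sign-flipped codes, equals $\mathbf{I}_F+\tfrac{1}{\sigma^2}\sum_k\tfrac{a_k^2}{d_k}\mathrm{diag}\{\mathbf{w}_k\}\tilde{\mathbf{g}}_k\tilde{\mathbf{g}}_k^{H}\mathrm{diag}\{\mathbf{w}_k\}$, which by the joint distributional equality established above has the same law as the original matrix with $\mathbf{g}_k$ in place of $\tilde{\mathbf{g}}_k$. Since $\mathrm{J}(\mathbf{W},\sigma^2)$ depends on the $\mathbf{g}_k$'s only through the expectation of a measurable functional of their joint distribution, the EMI is unchanged. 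No step is particularly delicate; the only subtlety worth stating explicitly is that the sign matrices $\mathbf{S}_k$ may differ across $k$, which is harmless precisely because the columns of $\mathbf{G}$ are mutually independent.
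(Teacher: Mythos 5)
Your proof is correct and follows essentially the same route as the paper: you absorb the sign pattern into the Gaussian vector via a diagonal $\pm 1$ (hence unitary) matrix and invoke the invariance in distribution of i.i.d.\ complex Gaussian vectors under unitary transformations, which is exactly the paper's one-line argument, just spelled out in more detail. The extra care you take with per-user sign matrices and independence across columns is a harmless elaboration of the same idea.
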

\begin{proof} 
	The proof follows by showing that a column of ${\mathbf{H}}$, given as ${\bf h}_k=\frac{a_k}{\sqrt{d_k}} \mathrm{diag}\{{\bf w}_{k}\} {\bf g}_k$, has the same distribution as $\frac{a_k}{\sqrt{d_k}}\mathrm{diag}\{|{ w}_{1,k}|,...,|{ w}_{F,k}|\} {\bf g}_k$. The claim follows directly from the invariance of  i.i.d complex Gaussian vectors in distribution under unitary transformation~\cite{AWGNCApTeltar99}.
\end{proof}
Even though the cross-correlation properties of the spreading codes is not a determining parameter in~\eqref{eq:optimizeMutualInfo},  the optimal pairing of UEs and the power loading on each sub-channel need to be studied. 
The sparsity requirement for spreading codes  impose binary constraints in~\eqref{eq:optimizeMutualInfo} that makes the problem non-convex. Also, the convexity of the objective function in~\eqref{eq:optimizeMutualInfo} cannot be verified due to the expectation operator. 
However, the objective function without the expectation operator can be shown to be non-convex.
Moreover, the expectation in~\eqref{eq:optimizeMutualInfo} needs to be evaluated in a concise form and in terms of the spreading codes and the pathloss values.

\section{An optimization approach based on asymptotic analysis} 
\label{sec: large system analysis}
In the following, we use theory of large random matrices~\cite{RMT} to characterize the EMI in terms of the spreading codes and the pathloss values. The large system analysis of the problem is carried out in the asymptotic regime where 
$F\rightarrow \infty$ with $KF^{-1}\in (0,\infty)$ and $d_k F^{-1}\in (0,1],\forall k$.
 The limiting results yield rather accurate approximations for the finite-size scenarios~\cite{RMT}. 
In deriving the large system analysis, we use subscript $F$ to denote the dependency of the entities on the system size.

\begin{theorem}
	\label{th:cap conv}
	Consider the channel matrices ${\bf H}_{F}=[{\bf h}_1,...,{\bf h}_K]$ with ${\bf h}_{k}=\frac{a_k}{\sqrt{d_k}} \mathrm{diag}\{{\bf w}_{k}\} {\bf g}_{k}, \forall k \in\mathcal{K}$. 
	The entries of ${\bf G}_{F}=[{\bf{g}}_1,...,{\bf{g}}_{K}]$  are i.i.d standard complex Gaussian random variables.
	The deterministic vectors ${\bf w}_k,\forall k \in \mathcal{K}$ are the columns of ${\bf W}_F\in \mathcal{C}_1$, each with $d_k$ non-zero values.  The scalars $\set{a_k}{k}{K}$ are bounded real values.
	Then, as $F\rightarrow \infty$ with $KF^{-1}\in (0,\infty)$ and $d_k F^{-1}\in (0,1],\forall k$, the EMI converges to a deterministic equivalent such that
	\begin{equation}
	\label{eq:Converg CAp}
	{\mathrm J}_{F}({\bf W}_F,\sigma^2)=\bar{\rm J}_F({\bf W}_F,\sigma^2)+\epsilon_{F}
	\end{equation}
	where $\epsilon_F=\mathcal{O}(\frac{1}{d^2})$,
	$d=\min\set{d_k}{k}{K}$, and 
	\begin{equation}
	\label{eq:asymp I theorem}
	\begin{aligned}
	&\bar{{\rm J}}_F({\bf W}_F,\sigma^2)=\frac{1}{F}\sum_{k\in \mathcal{K}} \log(1+\frac{ a_k^2 }{\sigma^2{d_k}}\sum_{f\in\mathcal{F}} {w_{f,k}^2 r_{f}})\\
	& \!\!\!\!+\frac{1}{F}\sum_{f\in\mathcal{F}} \log(1+\frac{1}{\sigma^2}\sum_{k\in\mathcal{K} } \frac{1}{d_k}{ w_{f,k}^2a_k^2\tilde{r}_{k}})-\frac{1}{\sigma^2 F}\sum_{f\in\mathcal{F}}\sum_{k\in\mathcal{K}}\frac{w^2_{f,k}}{d_k}a_k^2 r_{f}\tilde{r}_k
   \end{aligned}
    \end{equation}
where $r_f({\bf W}_F,\sigma^2)$ and $\tilde{r}_k({\bf W}_F,\sigma^2)$ are the solutions of 
\begin{equation}
\label{eq:rf and rk}
\begin{aligned}
&r_{f}=(1+\frac{1}{\sigma^2}\sum_{k\in\mathcal{K}} \frac{1}{d_k}w^2_{f,k} a_k^2 \tilde{r}_k  )^{-1}, \forall f\in\mathcal{F},\\
&\tilde{r}_k=(1+  \frac{ a_k^2}{\sigma^2{d_k}}   \sum_{f\in\mathcal{F}} w^2_{f,k}  {r}_{f})^{-1},\forall k\in\mathcal{K}.
\end{aligned}
\end{equation}
\end{theorem}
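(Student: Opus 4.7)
The plan is to follow the Gaussian tools methodology (Stein/Nash integration by parts together with the Poincar\'e--Nash variance inequality) to produce deterministic equivalents for the resolvents of $\mathbf{H}\mathbf{H}\herm$ and $\mathbf{H}\herm\mathbf{H}$, then integrate over the noise variance to recover the mutual information. The structural observation driving the rate $\mathcal{O}(d^{-2})$ is that the entries of $\mathbf{H}$ are independent complex Gaussians with variance profile $\mathbb{E}|[\mathbf{H}]_{f,k}|^2 = a_k^2 w_{f,k}^2/d_k$, which under the power budget~\eqref{eq:pwr constraint} is of order $1/d$ rather than the customary $1/F$. Every error estimate must be tracked in this refined scale.

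First, I would introduce the resolvents $\mathbf{Q}(z) = (z^{-1}\mathbf{H}\mathbf{H}\herm + \mathbf{I}_F)^{-1}$ and $\tilde{\mathbf{Q}}(z) = (z^{-1}\mathbf{H}\herm\mathbf{H} + \mathbf{I}_K)^{-1}$, and use the identity
\begin{equation*}
\log\big|\mathbf{I}_F + \sigma^{-2}\mathbf{H}\mathbf{H}\herm\big| = \int_{\sigma^2}^{\infty} \frac{F - \tr\mathbf{Q}(z)}{z}\,\frac{dz}{z},
\end{equation*}
so that the problem reduces to controlling $\tr\,\mathbb{E}[\mathbf{Q}(z)]$ uniformly in $z\geq \sigma^2$. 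Applying the complex Gaussian integration-by-parts formula to $\mathbb{E}\bigl[H_{f,k}^{*}[\mathbf{Q}]_{f',f}\bigr]$ and using the resolvent identity produces approximate relations of the form $\mathbb{E}[Q_{f,f}] \approx (1 + \sigma^{-2}\sum_{k} d_k^{-1} a_k^2 w_{f,k}^2\,\mathbb{E}[\tilde Q_{k,k}])^{-1}$ and its dual, with remainders made of cross-covariances between diagonal entries of the resolvents.

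Next, I would bound these cross-covariances by applying the Poincar\'e--Nash inequality to the i.i.d.\ Gaussian entries of $\mathbf{G}_F$. Because each summand in the variance bound is proportional to an entrywise variance $a_k^2 w_{f,k}^2/d_k = \mathcal{O}(1/d)$, one obtains $\mathrm{Var}([\mathbf{Q}]_{f,f}) = \mathcal{O}(1/d^2)$ and $\mathrm{Var}([\tilde{\mathbf{Q}}]_{k,k}) = \mathcal{O}(1/d^2)$, uniformly in $z\geq \sigma^2$. In parallel, a standard monotone/contraction argument on the positive cone shows that the fixed-point system~\eqref{eq:rf and rk} admits a unique positive solution $(r_f,\tilde r_k)$ and that this solution depends Lipschitz-continuously on perturbations of the right-hand side. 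Combining Lipschitz stability with the approximate identities of Step~2 and the variance control of Step~3 upgrades to $|\mathbb{E}[Q_{f,f}] - r_f| = \mathcal{O}(1/d^2)$ and similarly for $\tilde Q_{k,k}$. Substituting the trace-level equivalent $F^{-1}\tr\,\mathbb{E}[\mathbf{Q}(z)] = F^{-1}\sum_f r_f(z) + \mathcal{O}(1/d^2)$ into the integral representation and simplifying via the fixed-point equations recovers the three-term formula~\eqref{eq:asymp I theorem}.

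The main technical obstacle is tracking the $1/d^2$ scale throughout the chain of inequalities. The classical Hachem--Loubaton--Najim estimates for variance-profile models yield $\mathcal{O}(1/F^2)$ errors under the assumption of $\mathcal{O}(1/F)$ entry variances; here sparsity weakens that variance scale to $\mathcal{O}(1/d)$, so each step --- the Gaussian integration-by-parts remainders, the Poincar\'e--Nash bounds, the Lipschitz stability of the fixed-point system, and the $z$-integration --- must be rewritten to expose the dependence on $d = \min_k d_k$ rather than on $F$. A subtle point is to verify that the implicit constants remain bounded in the prescribed regime $d_k F^{-1}\in(0,1]$, $KF^{-1}\in(0,\infty)$, which requires uniform bounds on $\{a_k\}$, on the power-normalized columns $\mathbf{w}_k/\sqrt{d_k}$, and on the fixed-point quantities $r_f,\tilde r_k\in(0,1]$.
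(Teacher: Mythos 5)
Your proposal follows essentially the same route as the paper's proof: the integral representation of the EMI over the noise variance, Gaussian integration by parts combined with the Nash--Poincar\'e inequality to obtain the implicit deterministic system and control its fluctuations, and a stability argument for the fixed-point equations to pass to the canonical solution $(r_f,\tilde r_k)$, all while tracking the $\mathcal{O}(1/d)$ entry-variance scale. The only minor discrepancy is that you claim per-entry variance bounds $\mathrm{Var}([\mathbf{Q}]_{f,f})=\mathcal{O}(1/d^2)$, whereas the paper (and what the argument actually needs) is the weaker trace-level bound $\mathrm{Var}(\tr\mathbf{Q})\leq CK/d$, which after the additional $1/(dF)$ prefactor in the remainder yields the stated $\mathcal{O}(1/d^2)$ rate; this does not affect the validity of your plan.
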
 
\begin{proof}
The proof of the theorem is given in Appendix~\ref{sec:prof of the 1} wherein we use an Integration by parts formula~\cite{Hachem07} to derive an expression for the expectation of the mutual information as in~\eqref{eq:Converg CAp}. Then,
we derive an upper-bound for $\epsilon_{F}$  using Nash-Poincar\'{e} inequality~\cite{Hachem07} where the convergence rate of $\mathcal{O}(\frac{1}{d^2})$ is claimed accordingly.
The convergence ${\mathrm J}_{F}({\bf W}_F,\sigma^2)-\bar{\rm J}_F({\bf W}_F,\sigma^2)\rightarrow 0$ can be also claimed  relying on Girko's law~\cite[Section 3.2.3]{Mller2013ApplicationsOL}~\cite[Theorem 6.10]{RMT}.
An alternative proof based on Replica method is also given in~\cite{ReplicamethodMullerBook}. 
Regarding the convergence rate, it is shown in~\cite{hachemCLT2008} that in the case where ${\bf h}_k,\forall k\in\mathcal{K}$ are Gaussian vectors with given variance profiles, the convergence rate is  $\mathcal{O}(\frac{1}{F^2})$. One might be able to obtain the convergence rate declared in the theorem by properly scaling the variances in~\cite{hachemCLT2008} while ensuring that the assumptions therein remain valid. However, for
convenience of the reader and to avoid ambiguity, a straightforward proof of the theorem is presented in Appendix~\ref{sec:prof of the 1} based on an
alternative technique, known as the Gaussian method~\cite{Hachem07,RMT}, which is particularly
suited to random matrix models with Gaussian~entries.
\end{proof}

According to Theorem~\ref{th:cap conv}, the EMI ${\mathrm J}_F({\bf W}_F,\sigma^2)$ converges asymptotically to the deterministic equivalent $\bar{\mathrm J}_F({\bf W}_F,\sigma^2)$ with a convergence rate of $\mathcal{O}(\frac{1}{d^2})$. In the finite scenarios of interest with a moderate number of sub-channels, $d_k$ values can be small relative to $F$.\footnote{Note that while the limiting results are obtained in the asymptotic regime, those can be applied as approximations for the finite scenarios with dimensions as small as 8 and even 4 or 2~\cite[Section 2.2.1]{RMT}.} In such finite cases, the analysis in  Section~\ref{sec:asymp in finit} shows that the residual term $\epsilon_F$ appears as a small incremental gain in the EMI of the sparse spreading scheme, which is dictated mainly by the number of non-zero elements in the codes.
Keeping this in mind, we propose an optimization approach as in the following. In Section~\ref{sec: opt code asymp}, we first formulate an optimization problem to get the set $\bar{\mathcal{C}^*}$ of all power constrained spreading codes, irrespective of the sparsity constraints, which maximize the deterministic EMI $\bar{\mathrm J}_F({\bf W}_F,\sigma^2)$. This yields a simple resource allocation rule that facilitates the construction of the desired sparse spreading codes via an efficient partitioning algorithm. The details about the partitioning algorithms is delegated to Section~\ref{sec:Alg max}. In Section~\ref{sec:asymp in finit}, we show that the sparse solutions in $\bar{\mathcal{C}^*}$ additionally harness the  small incremental gain inherent in the residual term $\epsilon_{F}$ in the finite regime.
The analysis in Section~\ref{sec:RMT ergotic EMI} eventually yields an upper-bound on the gap to the optimum, which is shown to be close to zero for the sparse solutions in $\bar{\mathcal{C}^*}$.


\section{Maximizing the ergodic mutual information}\label{sec:RMT ergotic EMI}
In the sequel, we omit the subscript $F$ denoting the dependency on system size. Also, observe that $\bar{{\rm J}}({\bf W},\sigma^2)$ in~\eqref{eq:asymp I theorem} depends only on squares of $w_{f,k}$ values. Therefore, with a change of variable $v_{f,k}=\frac{1}{d_k}w_{f,k}^2$, hereafter, we  express the EMI and the related entities as a function of matrix ${\vec V}=[v_{f,k}]_{f\in\mathcal{F},k\in\mathcal{K}}$. Given a matrix ${\vec V}$, the corresponding spreading vectors $\{{\bf w}_k\}$ can be obtained up to an uncertainty in the signs of the entries in the spreading vectors. It is shown in Proposition~\ref{prop:sign}  that the objective function under the considered i.i.d channel model is indifferent to the signs of the spreading code entries. Thus, hereafter, we refer to $\bf V$ and $\bf W$ interchangeably as the spreading~matrix.

\subsection{The optimal spreading in the asymptotic regime}
\label{sec: opt code asymp}
We first neglect the sparsity constraints, and define  $\mathcal{C}_2\triangleq\{{\bf V}| {v}_{f,k}\in \mathbb{R}^+, \sum_{f\in\mathcal{F}} v_{f,k}\leq P_k,\,\forall k\in\mathcal{K}\}$ to be the set of all the power constrained spreading matrices $\vec{V}$. Then, we formulate the problem of maximizing the deterministic EMI $\bar{\rm J}({\bf V},\sigma^2)$ as follows
\begin{subequations}
	\label{eq:Opt Asymp Rewriten}
\begin{align}
&\underset{{\vec V}\in\mathcal{C}_2}{\max} \quad \bar{{\rm J}}({\bf V},\sigma^2)\\
&\,\,\text{s.t.}\quad \sum_{f \in \mathcal{F}} v_{f,k}\leq P_k , \forall k\in \mathcal{K}\label{eq:Constrain 2},\\
&\quad \quad\,\,\,\, v_{f,k}\geq 0,\,\forall k\in \mathcal{K},f \in \mathcal{F}.\label{eq:Constrain 3}
\end{align}
\end{subequations}
This optimization problem yields the set of all spreading codes that maximize the deterministic EMI $ \bar{\rm J}({\bf V},\sigma^2)$ subject to the power constraints and irrespective of the sparsity constraints. 

The Karush-Kuhn-Tucker (KKT) conditions~\cite{Boyd-Vandenberghe-04} are necessary conditions for a matrix ${\bf V} $ to be a local optimal solution of the problem in~\eqref{eq:Opt Asymp Rewriten}. However, the KKT conditions are not necessarily the sufficient conditions. The sufficiency and the globally optimality of the solutions are discussed later.
The Lagrangian associated with~\eqref{eq:Opt Asymp Rewriten} is given as
\begin{equation}
L(v_{i,j},\lambda_{i,j},\delta_{j})=-\bar{\rm J}({\bf V},\sigma^2)-\sum_{i\in\mathcal{F}}\sum_{j\in \mathcal{K}}\lambda_{i,j}v_{i,j}+\sum_{j\in \mathcal{K}}\delta_j \big( \sum_{i \in \mathcal{F}} {v}_{i,j}-P_j\big)
\end{equation} 
where the Lagrangian variables  $\delta_j$ and $\lambda_{i,j}$ are associated with constraints~\eqref{eq:Constrain 2} and~\eqref{eq:Constrain 3}, respectively.
The gradient of the Lagrangian can then be evaluated as
\begin{equation}
  \nabla_{\!\!_{f,k}}L(v_{i,j},\lambda_{i,j},\delta_{j}) =-\frac{\partial \bar{{\rm J}}({\bf V},\sigma^2)}{v_{f,k}}-\lambda_{f,k}+\delta_{k}, \quad\forall f\in\mathcal{F},k\in \mathcal{K}.
\end{equation}
Note that $\bar{{\rm J}}({\bf V},\sigma^2)$ depends on the entries of $\bf V$ via $r_f({\bf V},\sigma^2)$ and $\tilde{r}_k({\bf V},\sigma^2)$ as in~\eqref{eq:rf and rk} and~\eqref{eq:asymp I theorem}. Since those are the solutions to the saddle point equations, the partial derivatives $\frac{\partial \bar{{\rm J}}({\bf V},\sigma^2)}{\partial r_{f}}$ and $\frac{\partial \bar{{\rm J}}({\bf V},\sigma^2)}{\partial \tilde{r}_{k}}$ are zero at any point given by $({\bf V}, r_f({\bf V},\sigma^2),\tilde{r}_k({\bf V},\sigma^2))$.\footnote{One can verity this by evaluating the partial derivatives using~\eqref{eq:rf and rk} and~\eqref{eq:asymp I theorem}.}
 Therefore, the chain rules of derivatives~\cite{AntonChainRule95} allow the partial derivative $\frac{\partial \bar{{\rm J}}({\bf V},\sigma^2)}{v_{f,k}}$ to be evaluated by assuming $r_f$ and $\tilde{r}_k$ as constants. This, in particular, gives $
\frac{\partial \bar{{\rm J}}({\bf V},\sigma^2)}{v_{f,k}}=\frac{1}{\sigma^2F}a_k^2 \tilde{r}_k r_f$, which yields the KKT conditions as
\begin{equation}
\begin{aligned}
& \lambda_{f,k}^*\geq 0,\quad \lambda_{f,k}^* v_{f,k}=0,\quad \delta_k^*\geq 0,\quad \delta_k^* \big( \sum_{i \in \mathcal{F}} {v}_{i,k}-P_k\big)=0,\quad\forall f\in\mathcal{F},k\in \mathcal{K},\\
&\quad\quad\quad\quad\quad-\frac{1}{\sigma^2F}a_k^2 \tilde{r}_k r_f -\lambda_{f,k}^*+\delta_k^*=0,\quad\forall f\in\mathcal{F},k\in \mathcal{K}
\end{aligned}
\end{equation}     
where $\lambda_{f,k}^*$ and $\delta_k^*$ denote the optimal values of the Lagrangian variables.
Since $\lambda_{f,k}$ can be solved from the last equation, the KKT conditions can be simplified as
\begin{subequations}
	\label{eq:KKT core}
\begin{align}
& \delta_k^* \big( \sum_{i \in \mathcal{F}} {v}_{i,k}-P_k\big)=0\label{eq:KKT1},\quad \forall k\in \mathcal{K},\\
&(\delta_k^*-\frac{1}{\sigma^2F}a_k^2 \tilde{r}_k r_f )v_{f,k}=0\label{eq:KKT2},\quad\forall f\in\mathcal{F},k\in \mathcal{K},\\
&\delta_k^*\geq \frac{1}{\sigma^2F}a_k^2 \tilde{r}_k r_f,\quad\forall f\in\mathcal{F},k\in \mathcal{K}. \label{eq:KKT3}
\end{align}
\end{subequations}  
According to the KKT conditions,
some properties for the optimal solutions to~\eqref{eq:Opt Asymp Rewriten} can be summarized as in the following proposition.
\begin{proposition}
	\label{prop:KKT properties}
The spreading matrices that maximize the ergodic mutual information in~\eqref{eq:Opt Asymp Rewriten} have the following properties:
\begin{itemize}
	\item The power constraints in~\eqref{eq:Constrain 2} are satisfied with equality for all UEs, i.e., all UEs are active, and transmit with full power.
	\item The parameters $r_f$ are equal to $r^*,\,\forall f\in \mathcal{F}$ where $r^*$ is the solution of the following fixed point iterations
	\begin{equation}
	\label{eq:opt r}
	r^*=\big(1+\frac{1}{F} \sum_{k\in\mathcal{K}} \frac{ P_k a_k^2}{\sigma^2+{{P_k}}a_k^2 r^*}   \big)^{-1}.
	\end{equation}
	\item The parameters $\tilde{r}_k,\forall k$ are equal to $\tilde{r}_k^*,\forall k$ where 
	\begin{equation}
	\tilde{r}_k^*=\frac{{\sigma^2}}{{\sigma^2}+  {{P_k}} a_k^2  {r}^*},\forall k\in \mathcal{K}.
	\end{equation}
\end{itemize}
\end{proposition}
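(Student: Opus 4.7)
My plan is to extract the three properties directly from the KKT system~\eqref{eq:KKT core} together with the saddle-point equations~\eqref{eq:rf and rk}. Throughout I would use, without further comment, that $r_f\in(0,1]$ and $\tilde r_k\in(0,1]$ by construction. The first bullet is immediate: the right-hand side of~\eqref{eq:KKT3} is strictly positive since $a_k^2,\,r_f,\,\tilde r_k>0$, so $\delta_k^*>0$ for every $k$, and complementary slackness~\eqref{eq:KKT1} then forces $\sum_f v_{f,k}=P_k$ for every UE.

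The second bullet is the main obstacle and requires a two-step argument. Setting $r_{\max}=\max_{f\in\mathcal F}r_f$, condition~\eqref{eq:KKT3} yields $\delta_k^*\geq\frac{a_k^2\tilde r_k}{\sigma^2 F}\,r_{\max}$, and combining this with~\eqref{eq:KKT2} forces $r_f=r_{\max}$ whenever $v_{f,k}>0$. In other words, every UE spreads its power exclusively on sub-carriers attaining the global maximum of $\{r_f\}$; by the first bullet each UE has a nonempty support, so $r_{\max}$ is actually realized there. The delicate step is to rule out unused sub-carriers. If some $f_0$ had $v_{f_0,k}=0$ for all $k$, the first line of~\eqref{eq:rf and rk} would give $r_{f_0}=1$, whereas any used sub-carrier $f$ satisfies $r_f<1$ strictly because its denominator $1+\frac{1}{\sigma^2}\sum_k v_{f,k}a_k^2\tilde r_k$ exceeds $1$. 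That would place $r_{\max}=1$ at an unused site while the actually-used sub-carriers had $r_f<1$, contradicting the characterization of $r_{\max}$ just derived. Hence every sub-carrier is loaded by some UE, each used sub-carrier attains $r_{\max}$, and all $r_f$ collapse to a single value $r^*\in(0,1)$.

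For the third bullet, substituting $r_f\equiv r^*$ into the second line of~\eqref{eq:rf and rk} and invoking $\sum_f v_{f,k}=P_k$ immediately yields $\tilde r_k^*=\sigma^2/(\sigma^2+P_k a_k^2 r^*)$. To obtain~\eqref{eq:opt r}, I would invert the first line of~\eqref{eq:rf and rk}, sum over $f\in\mathcal F$, interchange the order of summation, collapse the inner sum via $\sum_f v_{f,k}=P_k$, divide by $F$, and insert the just-derived $\tilde r_k^*$. Uniqueness of $r^*$ in $(0,1)$ then follows from strict monotonicity of the right-hand side of the resulting fixed-point equation, a routine check I would mention only in passing.
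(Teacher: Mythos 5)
Your proof is correct and follows essentially the same route as the paper's: strict positivity of the multipliers plus complementary slackness for the first bullet, the ``each UE loads only sub-carriers attaining $\max_f r_f$'' argument from~\eqref{eq:KKT2}--\eqref{eq:KKT3} for the second, and substitution of $r_f\equiv r^*$ and $\sum_f v_{f,k}=P_k$ into~\eqref{eq:rf and rk} followed by inversion and summation over $f$ for the third. The one point where you go beyond the paper is the explicit contradiction ruling out unused sub-carriers (an empty $f_0$ would force $r_{f_0}=1>r_f$ for every loaded $f$, leaving no sub-carrier on which any UE could place power); the paper's appendix glosses over this with ``all UEs have the same preference,'' so your version is a welcome tightening rather than a different approach.
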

\begin{proof}
 The proof is given in Appendix~\ref{sec: Proof of kkt propertis}.
\end{proof}
As a result of Proposition~\ref{prop:KKT properties}, the solutions satisfying the KKT conditions 
 must give $r_f=r^*,\forall f\in\mathcal{F}$ and $\tilde{r}_k=\tilde{r}_k^*,\forall k\in \mathcal{K}$.
  Observe that the values of $r^*$ and $\tilde{r}_k^*$ are given independently from the values of $v_{f,k}$. Let $\bar{{\rm J}}({\bf V},\tilde{r}_k^*,r^*,\sigma^2)$ denotes a function obtained by plugging the $r^*$ and $\tilde{r}_k^*$ values into~\eqref{eq:asymp I theorem}, i.e.,
  \begin{equation}
      \begin{aligned}
      \bar{{\rm J}}({\bf V},\tilde{r}_k^*,r^*,\sigma^2)&=-\frac{1}{F}\sum_{k\in \mathcal{K}} \log(\tilde{r}^*_{k})
-\log(r^* )  -\frac{r^*}{\sigma^2F}\sum_{k\in\mathcal{K}}a_k^2\tilde{r}^*_k \sum_{f\in\mathcal{F}}v_{f,k},\\
 &=-\frac{1}{F}\sum_{k\in \mathcal{K}} \log(\tilde{r}^*_{k})
-\log(r^* )  -\frac{r^*}{\sigma^2F}\sum_{k\in\mathcal{K}}{P_k}a_k^2\tilde{r}^*_k
      \end{aligned}
  \end{equation}
  where the arguments of the logarithms are replaced with their equivalents from~\eqref{eq:rf and rk}, and in the last equality we used the first property from Proposition~\ref{prop:KKT properties}. Observe that the values of $\bar{{\rm J}}({\bf V},\tilde{r}_k^*,r^*,\sigma^2)$ depend only on $\tilde{r}_k^*$ and $r^*$ values. Thus, all the solutions of~\eqref{eq:Opt Asymp Rewriten}, satisfying the KKT conditions, attain the same value of the objective function, i.e., its  global maximum value.  This proves the sufficiency of the KKT conditions for a spreading matrix $\bf V$ to be the optimal solution of~\eqref{eq:Opt Asymp Rewriten}. 
As a result of the KKT conditions, the spreading codes $\vec{v}_k$, which maximize $\bar{{\rm J}}({\bf V},\sigma^2)$ in~\eqref{eq:Opt Asymp Rewriten}, can be evaluated as the positive solutions of the following indeterminate system of equations
\begin{subequations}
\label{eq:system of underdetermined eq}
\begin{align}
&\sum_{f \in \mathcal{F}} v_{f,k}=P_k,\,\forall k\in\mathcal{K},\label{eq:system of underdetermined eq a}\\
&\sum_{k\in\mathcal{K}} \beta_k v_{f,k}=\frac{1}{r^*}-1,\,\forall f\in\mathcal{F}\label{eq:system of underdetermined eq b}
\end{align}
\end{subequations}
where $\beta_k=\frac{ a_k^2}{\sigma^2+{{P_k}}a_k^2 r^*}$.
In these equations, $r^*$ is a fixed scalar, which is evaluated from~\eqref{eq:opt r}.
The equalities in~\eqref{eq:system of underdetermined eq b} are obtained by setting $r_f=r^*,\forall f$ in~\eqref{eq:rf and rk}. 
These equalities follow since any spreading matrix  that gives $r_f=r^*,\forall f$, equivalently, satisfies the second and third KKT conditions in~\eqref{eq:KKT core} as well. The equalities in~\eqref{eq:system of underdetermined eq a} are given as a result of the first KKT condition~in~\eqref{eq:KKT core}.

The system of equations in~\eqref{eq:system of underdetermined eq} has a simple implication. The first line of equalities in~\eqref{eq:system of underdetermined eq a} indicates that the UEs need to transmit with full power. The second line implies that the entries $v_{f,k}$ should be assigned such that $r_f$ values become the same across all sub-channels. One can verify that the dense spreading, i.e., $v_{f,k}=\frac{P_k}{F},\forall f,k$ is a solution of~\eqref{eq:system of underdetermined eq}.  This has been also observed in~\cite{MullerTulino04a} where the authors show that in a scenario with randomly assigned dense spreading sequences, the frequency-dependency of $r_f$ values vanishes asymptotically. Aside from the dense spreading matrix, sparse spreading matrices can be also designed to satisfy \eqref{eq:system of underdetermined eq}. To this end, the non-zero elements $v_{f,k}$ should be assigned to the sub-channels such that the weighted sum $\sum_{k\in\mathcal{K}} \beta_k v_{f,k}$ becomes the same, i.e., equal to $\frac{1}{r^*}-1$, across all sub-channels.
 In the symmetric case with $a_{k}=a$, $P_k=P$ and $d_k=d,\forall k$, the second line of equalites in~\eqref{eq:system of underdetermined eq b} becomes $\frac{a^2}{\sigma^2+Pa^2 r^*} \sum_{k\in \mathcal{K}}  v_{f,k}=\frac{1}{r^*}-1,\forall f\in\mathcal{F}$. Given $dK/F$ to be integer, one can verity that any regular spreading matrix with non-zero elements being $\frac{P}{d}$ is a solution to~\eqref{eq:system of underdetermined eq}. However, in the generic non-symmetric case, an irregular assignment of the non-zero values might arise to ensure the conditions in~\eqref{eq:system of underdetermined eq} to hold. Such an assignment for the generic case is done in Section~\ref{sec:Alg max} via a simple partitioning algorithm.

 \subsection{On the optimality of the asymptotic sparse spreading codes}
\label{sec:asymp in finit}
Based on the analysis in Section~\ref{sec: opt code asymp}, we know that 
the  power-constrained spreading codes in $\mathcal{C}_2$ that maximize the deterministic equivalent $\bar{\rm J}({\bf V},\sigma^2)$ in~\eqref{eq:Opt Asymp Rewriten} are given as the solutions of the system of equations in~\eqref{eq:system of underdetermined eq}.
This set of solutions has been defined as $\bar{\mathcal{C}}^*$. Hereafter, we use $\bar {\bf V}^*$ to refer to  a member of the set $\bar{\mathcal{C}}^*$.
The analysis in Theorem~\ref{th:cap conv} shows that the residual term $\epsilon$  
vanishes with a rate inversely proportional to the square of the number of non-zero elements in the codes. Thus, the solutions in $\bar{\mathcal{C}}^*$ attain the maximum of EMI ${\mathrm J}({{\bf V}},\sigma^2)$ in the asymptotic regime.   However, in the finite regime, 
$\epsilon$ term appears as a small incremental gain in the EMI formulation, which needs to be considered.
Let ${\bf V}^*_{d}$  to be the unknown optimal spreading matrix that maximizes ${\mathrm J}({\bf V},\sigma^2)$ subject to the sparsity constraints. 
Also, let   ${\mathrm J}_{d}^*\triangleq{\mathrm J}({{\bf V}}_{d}^*,\sigma^2)$ to be the maximum of ${\mathrm J}({{\bf V}},\sigma^2)$ attained by ${\bf V}^*_{d}$.
Now, the penalty when using any spreading matrix $\bar {\bf V}^*\in \bar{\mathcal{C}}^*$, given as a solution of the system of equations in~\eqref{eq:system of underdetermined eq}, instead of the optimal one ${\bf V}^*_{d}$ can be written as
\begin{equation}
\Delta_{d}=\underbrace{{\mathrm J}({\bf V}_{d}^*,\sigma^2)}_{{\mathrm J}^*_{d}}-{\mathrm J}(\bar{\bf V}^*,\sigma^2).
\end{equation}
Writing ${\mathrm J}({\bf V}_{d}^*,\sigma^2)$ and  ${{\mathrm J}}(\bar{\bf V}^*,\sigma^2)$ as the deterministic equivalents plus the residual terms, we get
\begin{subequations}
	\begin{eqnarray}
	&{\mathrm J}({\bf V}_{d}^*,\sigma^2)-\bar{{\rm J}}({\bf V}_{d}^*,\sigma^2)=\epsilon^{(1)} \\
	& {{\mathrm J}}(\bar{\bf V}^*,\sigma^2)-\bar{{\rm J}}(\bar{\bf V}^*,\sigma^2)=\epsilon^{(2)}
	\end{eqnarray}
\end{subequations}
where the additional index $i$ in $\epsilon^{(i)}$ is added to distinguish the above differences. Subtracting  the sides of  the above equalities, and rearranging the terms,  we get
$$
\underbrace{({\mathrm J}({\bf V}_{d}^*,\sigma^2)-{\mathrm J}(\bar{\bf V}^*,\sigma^2)) }_{\Delta_{d}}+
(\bar{{\rm J}}(\bar{\bf V}^*,\sigma^2)-\bar{{\rm J}}({\bf V}_{d}^*,\sigma^2))=
\Delta_{\epsilon}
$$
where $\Delta_{\epsilon}=\epsilon^{(1)}-\epsilon^{(2)}$.
Since the subtraction $(\bar{{\rm J}}(\bar{\bf V}^*,\sigma^2)-\bar{{\rm J}}({\bf V}_{d}^*,\sigma^2))$ in the left-hand is positive\footnote{Note that $\bar{{\rm J}}(\bar{\bf V}^*,\sigma^2)$ is the maximum of the objective function in~\eqref{eq:Opt Asymp Rewriten}.}, it can be claimed that the gap to the optimum $\Delta_{d}$ is bounded as
\begin{equation}
\label{eq:delta ineqlity}
\begin{aligned}
0 \leq \Delta_{d} &\leq \Delta_{\epsilon}^+
\end{aligned}
\end{equation}
where $\Delta_{\epsilon}^+$ denotes the positive values of $\Delta_{\epsilon}$. In general, $\Delta_{d}$ may attain negative values. However, ${\mathrm J}^*_{d}<{\mathrm J}(\bar{\bf V}^*,\sigma^2)$ can happen only when $\bar{\bf V}^*$ violates the sparsity constraints, which is not the case of interest. Therefore, $\Delta_{d}$ is lower bounded by zero.
Next, we look into the properties of the residual terms in the finite regime to characterize the gap to the optimum as given in~\eqref{eq:delta ineqlity}.

Generally speaking, 
the residual term can be roughly associated with the  concentration of UEs' powers in a fewer number of the elements in the spreading codes. This gives rise to the variance of the random channel  entries, which eventually appears as  $\epsilon$ in the EMI formulation.
 In order to get further insight into the structure of $\epsilon$ term, a numerical example is illustrated in Fig.~\ref{fig:Epsil var mean}. The results are generated based on the simulation assumptions given in Section~\ref{sec:sim setting} with $F=50$, $K=100$, and $d_K=d, \forall k$.
  For generating the results in Fig.~\ref{fig:Epsil var mean}, a randomly selected drop of UEs is taken, and the mean and the variance of the residual term $\epsilon = {\mathrm J}({\bf V}_{d},\sigma^2)-\bar{{\rm J}}({\bf V}_{d},\sigma^2)$ are evaluated over 1000 randomly selected spreading matrices~${\bf V}_{d} \in \mathcal{C}_1$. The subscript $d$ is used to emphasize that the spreading codes in the columns of ${\bf V}_{d}$ have $d$ non-zero values. 
     \begin{figure}[h]
  	\centering
  	\includegraphics[width=0.6\linewidth]{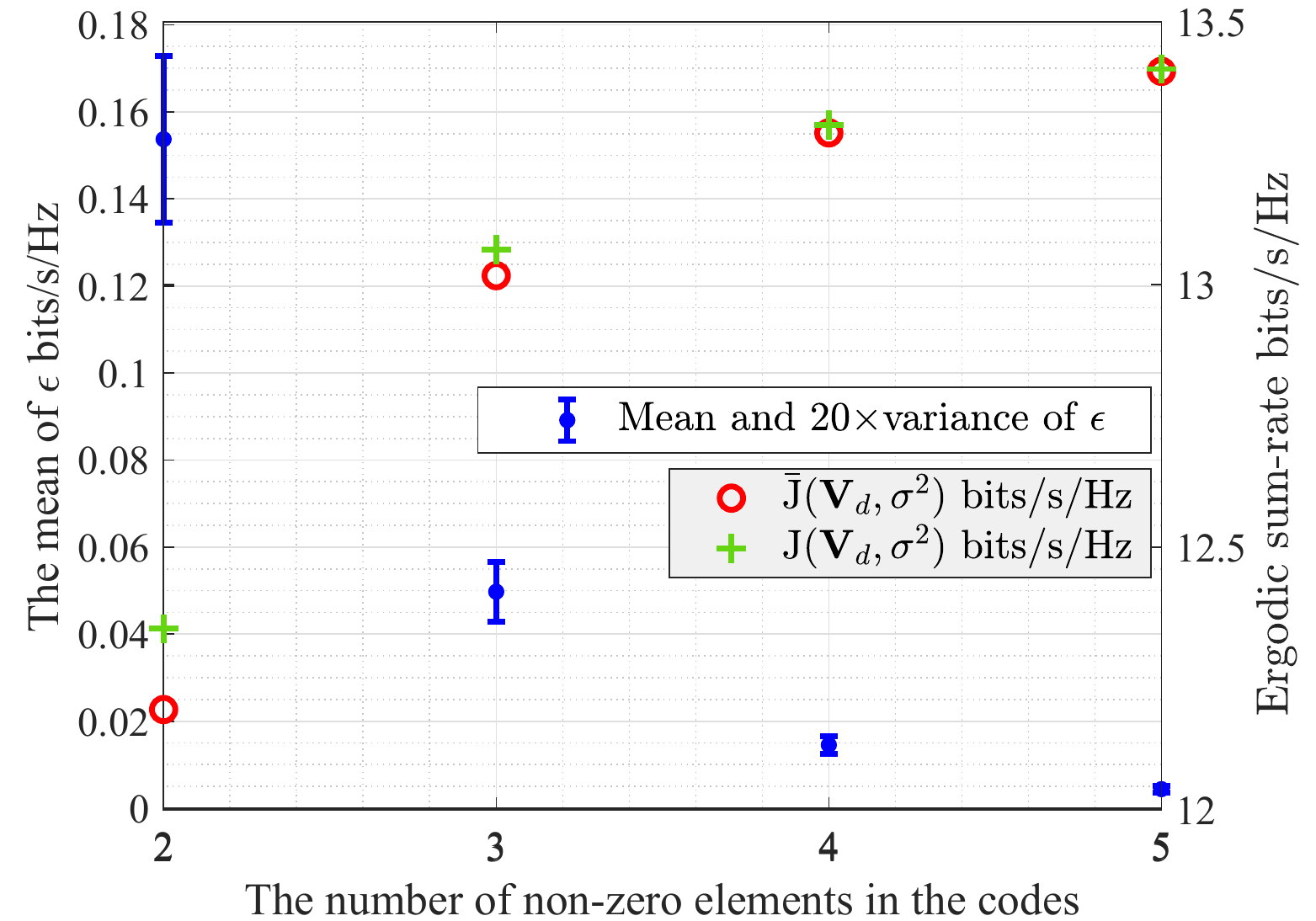}
  	\vspace{-0.3cm}
  	\captionof{figure}{Mean\&Var of $\epsilon$ (left axis), and ergodic sum-rate (right axis)  vs. $d$ for random spreading.}
  	\label{fig:Epsil var mean}
  \end{figure}
  Fig.~\ref{fig:Epsil var mean}   shows the mean and the variance of  $\epsilon$ values along with ${\mathrm J}({\bf V}_d,\sigma^2)$ and $\bar{\mathrm J}({\bf V}_d,\sigma^2)$ versus the number of non-zero elements  in the codes. The variance of $\epsilon$, depicted by the bars in the figure, is magnified 20 times for better illustration.
 The first observation is that the mean values of the $\epsilon$ term are distinct for different $d$ values. Moreover,
  the small variance of the residual term indicates that the values of $\epsilon$ do not vary abruptly among the power constrained  spreading vectors  with the same number of non-zero elements. Thus,  one might infer that the values of the residual term is dictated mainly by the number of non-zero elements in the spreading codes, which implies a decline in $\epsilon$ values as $d$ increases. The other observation  is that the residual term is small as compared to $\bar{\mathrm J}({\bf V}_d,\sigma^2)$.
  Generally speaking, in the scenarios with a moderate number of sub-channels $F$ as in Fig.~\ref{fig:Epsil var mean} , the residual term is larger for the sparse spreading  schemes with $d<<F$. While it almost disappears in the dense spreading case with $d=F$.
  This motivates the small incremental gain in $\epsilon$ for sparse spreading to be interpreted as sparsity gain. Note that there is a small loss associated with spreading in the considered system model.\footnote{ See~\cite{CDMAcodingTradeoff02} for the spreading coding trade-off.} Thus, the sparsity gain is coined here to reflect the increment in EMI due to sparse spreading as compared to  dense spreading.
  Recall the symmetric scenario mentioned in Section~\ref{sec: opt code asymp} wherein both the sparse-regular and the dense spreading matrices were among the solutions maximizing the deterministic $\bar{\mathrm J}({\bf V},\sigma^2)$ in~\eqref{eq:Opt Asymp Rewriten}. According to the above discussion, we expect that the exact EMI ${\mathrm J}({\bf V},\sigma^2)$ for the sparse-regular codes to be better of that in the  dense spreading scheme by an amount of $\epsilon$. Note that  $\epsilon$ for the dense spreading scheme is negligible. This has been also observed in~\cite{shitz2017} under the symmetric AWGN channel model. It is shown in~\cite{shitz2017} that the sparse-regular spreading codes yield slightly higher spectral efficiency as compared to the dense spreading scheme in the  symmetric scenario considered therein.

      To conclude this section, let us recall the $\Delta_{d}$ formulation in~\eqref{eq:delta ineqlity}, which gives the gap to the optimum ${\mathrm J}_{d}^*$ when using a solution $\bar{\bf V}^*\in \bar{\mathcal{C}}^*$ in the finite regime.
  Relying on the above analysis, we expect the $\epsilon^{(i)}, i=1,2$ terms to be small as compared to the corresponding deterministic parts. Also,   
  a solution $\bar {\bf V}^*$ that satisfies the sparsity constraints is expected to harness an incremental $\epsilon^{(2)}$ value close to the $\epsilon^{(1)}$ value. Therefore, the anticipated performance gap $\Delta_{d}$ is close to zero. Next, we find the desired sparse solutions in $\bar{\mathcal{C}}^*$ via an algorithmic solution.

\section{An Algorithm For Constructing The Sparse Spreading Matrices}
\label{sec:Alg max}
The problem of finding the subset of solutions in $\bar{\mathcal{C}}^*$ with the desired sparsity includes a zero-norm. The discrete and discontinuous nature of the zero-norm impedes the application of  standard convex analysis~\cite{BrucksteinSparseSol}. 
Fortunately, the system of equations in~\eqref{eq:system of underdetermined eq} unveils a simple rule for the allocation of the spreading codes. This allows the desired sparse codes to be obtained using an alternative algorithmic solution.
We are interested in determining the sparse spreading codes that satisfy~\eqref{eq:system of underdetermined eq}. 
Let the elements $v_{f,k}$ for each UE $k$ to be taken from $\{0,\frac{P_k}{d_k}\}$ and subject to the power constraint enforced by equalities in~\eqref{eq:system of underdetermined eq a}.
Based on~\eqref{eq:system of underdetermined eq b}, the problem is to allocate $\frac{P_k}{d_k}\beta_k$ values to sub-channels such that the sums of $\frac{P_k}{d_k}\beta_k$ values on each sub-channel become the same, i.e., equal to ($\frac{1}{r^*}-1$), across all the sub-channels.
 This problem falls within a class of partitioning problems that arises  in number theory and computer science~\cite{KORF1998181}.
Although the partitioning problem is NP-complete, there are heuristics that solve the problem in many instances, either optimally or approximately~\cite{EasyNPpoblm}.  
One such approach  is the greedy algorithm, which iterates through  $\frac{P_k}{d_k}\beta_k$ values in descending order, assigning each of them to whichever sub-channel has the smallest sum~\cite{Graham1969BoundsOM}.  These steps are summarized in Alg.~\ref{alg:partition}.
\begin{algorithm} [H]
	\caption{Partitioning solution}
	\label{alg:partition}
	\begin{algorithmic}[1]
		\STATE Divide the total power of each UE $k$ into $d_k$ power fragments.
		\STATE Set $v_{f,k}=0,\,\forall f,k$, and $\mathcal{J}=\{1,...,K\}$.
		\WHILE{$\mathcal{J}$ is non-empty}
		\STATE Set $k=\underset{j\in\mathcal{J}}{\arg\max} \,\frac{P_j}{d_j}\beta_j $.
		\STATE Set $f=\underset{i\in\mathcal{F}}{\arg\min} \,\eta_i$ with $\eta_i=\sum_{j\in\mathcal{K}} {\beta_j v_{i,j}}$.
		\STATE Set $v_{f,k}=P_k/d_k$ 
		\IF { $\sum_{i\in\mathcal{F}} v_{i,k}=P_k$, i.e., UE $k$ has assigned all of its power-fragments}
		\STATE Remove index $k$ from $\mathcal{J}$.
		\ENDIF		
		\ENDWHILE
	\end{algorithmic}
\end{algorithm}
In Alg.~\ref{alg:partition}, we try to make the sum terms $\eta_f$ across the sub-channels as equal as possible.
Let  ${\eta^*_{\rm{max}}}$ denotes the maximum of $\eta_f, \forall f\in\mathcal{F}$ in an optimal partitioning solution. Alg.~\ref{alg:partition} yields $\eta_f$ values such that
$\frac{\rm{max}(\eta_f)}{\eta^*_{\rm{max}}}\leq \frac{4}{3}-\frac{1}{3F}$ ~\cite{Graham1969BoundsOM}. One can always improve the homogeneity of $\eta_f$ values  by offloading the power from the sub-channels with largest  $\eta_f$ values into those with corresponding smallest values until~\eqref{eq:system of underdetermined eq b} holds up to the desired accuracy.
However, numerical analysis shows that Alg.~\ref{alg:partition} yields satisfactory results in most cases, and thus, further fine-tuning steps are redundant.
 Alg.~\ref{alg:partition} has a running time of $\mathcal{O}(2^F (d_{\rm{max}} K)^2)$~\cite{Graham1969BoundsOM}.

\section{Numerical results}
\label{sec:num res}
\subsection{Simulation assumptions}
\label{sec:sim setting}
The simulation results are generated in a scenario where a single-antenna BS serves single-antenna UEs in uplink. Transmit power of each UE is $1$ Watt, and the noise power is set to $-120$dB. The pathloss values are taken randomly and uniformly from the range of $-150$dB to $-60$dB to account for diverse received SNRs at the BS. 
 The final channel gains are given by the product of the pathlosses and the small-scale fading entries as in~\eqref{eq:resive vec}.
To keep the results comparable, the numbers of non-zero elements in the spreading codes are assumed to be the same for all UEs, i.e., $d_k=d,\forall k$.
The number of sub-channels is $F=50$ unless~mentioned~otherwise.

In addition to the method proposed in Section~\ref{sec:Alg max},  two alternative spreading schemes from the literature are considered.
The first one is the random spreading method that allocates the power constrained sparse spreading codes to UEs randomly.  A practically useful property of the random spreading scheme is that no coordination overhead is imposed. The other scheme is the coordinated regular spreading, which assigns the non-zero elements in the codes in a way that  each UE occupies a number of $d$ resources and each resource is used by a number of $\frac{K}{F}d$ UEs.
The ratio $\frac{K}{F}$ is chosen such that $\frac{K}{F}d$ be an integer.
In evaluating the performance of the considered methods, we average the corresponding attainable rates over 1000 UE drops, where, in each drop, the expectation involved in EMI ${\rm J}({\bf V},\sigma^2)$
is evaluated over 1000 random realizations of small-scale fading. The deterministic equivalent   values $\bar{\rm J}({\bf V},\sigma^2)$ are evaluated~from~\eqref{eq:asymp I theorem}.

\subsection{Evaluation of the performance of the proposed method}
      \begin{figure}[h]
	\centering
	\includegraphics[width=0.7\linewidth]{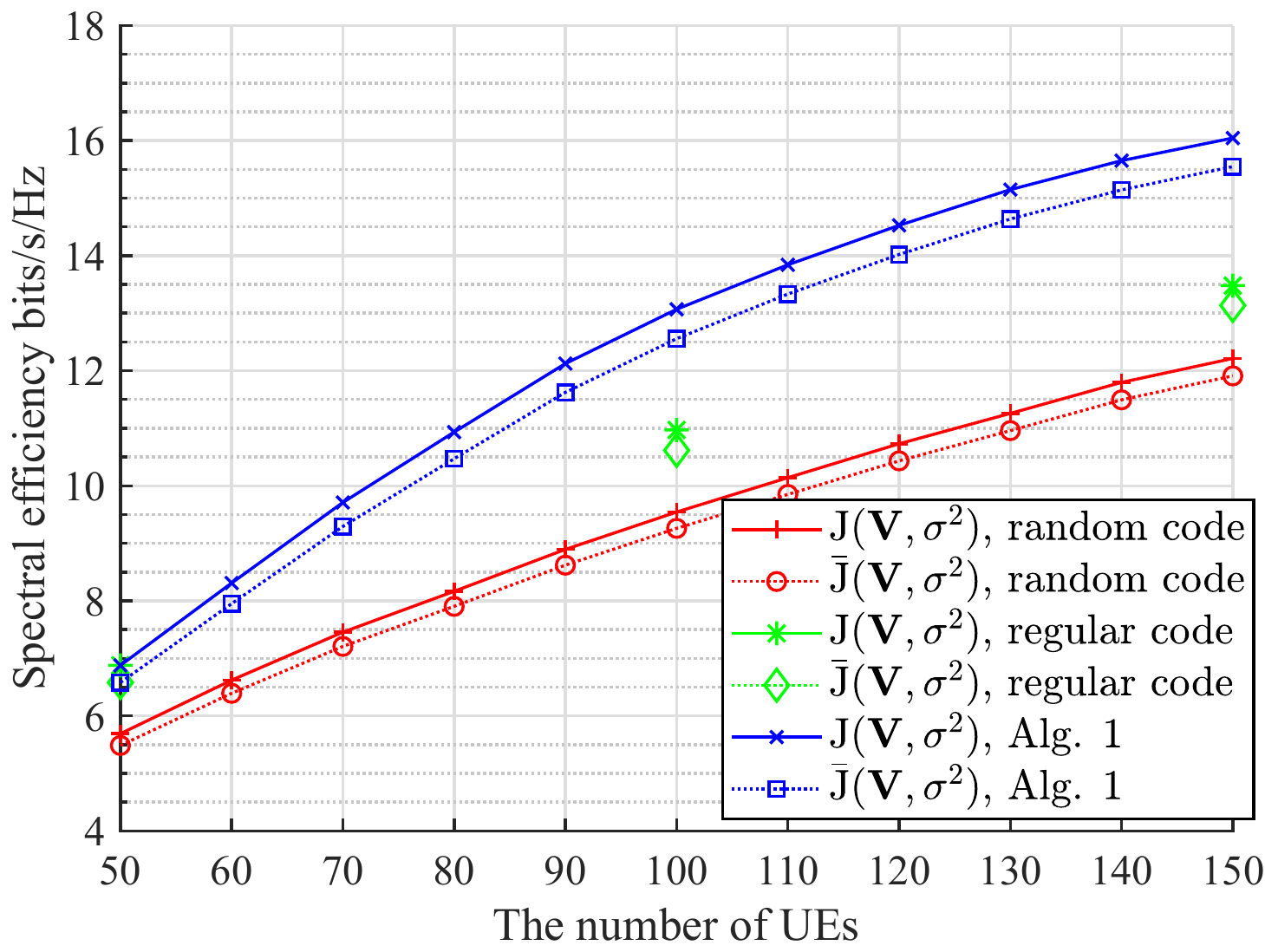}
	\captionof{figure}{The spectral efficiency ${\rm J}({\bf V},\sigma^2)$ vs. the number of UEs,\,$d=1, F=50$.}
	\label{fig:rate vs K d1}
\end{figure}
 The assessment of an optimal solution to the problem in~\eqref{eq:optimizeMutualInfo} requires exhaustive search over all the power constrained spreading matrices with the desired sparsity. This impedes the comparison of the results to the optimum. However, still we can evaluate and compare the performance in Alg.~\ref{alg:partition} 
 and in the aforementioned uncoordinated random and the coordinated regular spreading schemes.
Note that while the uncoordinated assignment of the spreading codes is a useful property, the random spreading causes a significant performance degradation. On the other hand, the coordinated allocation of spreading codes in the regular scheme does not consider the asymmetry in the system model due to the diverse pathloss values and power constraints at the UEs. On the contrary, the spreading codes in Alg.~\ref{alg:partition}
are allocated according to the UEs' pathloss values. This results in the minimal coordination requirement since the statistical properties of the channel matrix can be assumed to remain constant for a sufficiently large number of reception phases~\cite{BjornsonLuca2016}.

Figures~\ref{fig:rate vs K d1} and~\ref{fig:rate vs K d2} show the attainable rates (bits/s/Hz) for the aforementioned methods as a function of the number of UEs. 
In Fig.~\ref{fig:rate vs K d1}, we apply Alg.\ref{alg:partition} to the special non-spreading case with $d=1$ as well.
Note that Alg.\ref{alg:partition} allocates the sparse codes to UEs such that the deterministic equivalent of EMI is maximized. The motivation therein is that the residual term  is small relative to the deterministic equivalent part, and the small gain in $\epsilon$ is harnessed inherently due to the sparsity of the allocated codes.  
While we expect $\epsilon$ to be relatively small for the cases with $d>1$, due to the fast convergence rate of $\mathcal{O}(\frac{1}{d^2})$, the analysis in the non-spreading case with $d=1$ may be considered as a heuristic attempt.
 Interestingly, the difference between ${\rm J}({\bf V},\sigma^2)$ and $\bar{\rm J}({\bf V},\sigma^2)$ is relatively small even in such a case, and the coordinated allocation of resources gives 20\% and 35\% enhancement in the spectral efficiency at 100\% and 300\% system load, respectively. The system load is defined as the ratio of $\frac{K}{F}$ in percentage.
Fig.~\ref{fig:rate vs K d2} shows the rates for the case with LDS codes having $d=2$ non-zero values. In this case, the gain in the coordinated assignment of spreading codes is about 6.5\% and 11\% at 100\% and 300\% load, respectively, which is less than that in Fig.~\ref{fig:rate vs K d1}. The other observation is that the regular spreading method in both of Figs.~\ref{fig:rate vs K d1} and~\ref{fig:rate vs K d2} gives slightly better spectral efficiency as compared to the random spreading scheme, however, its performance 
is inferior to that of Alg.~\ref{alg:partition}.
\begin{figure}[t!]
	\centering
	\includegraphics[width=0.7\linewidth]{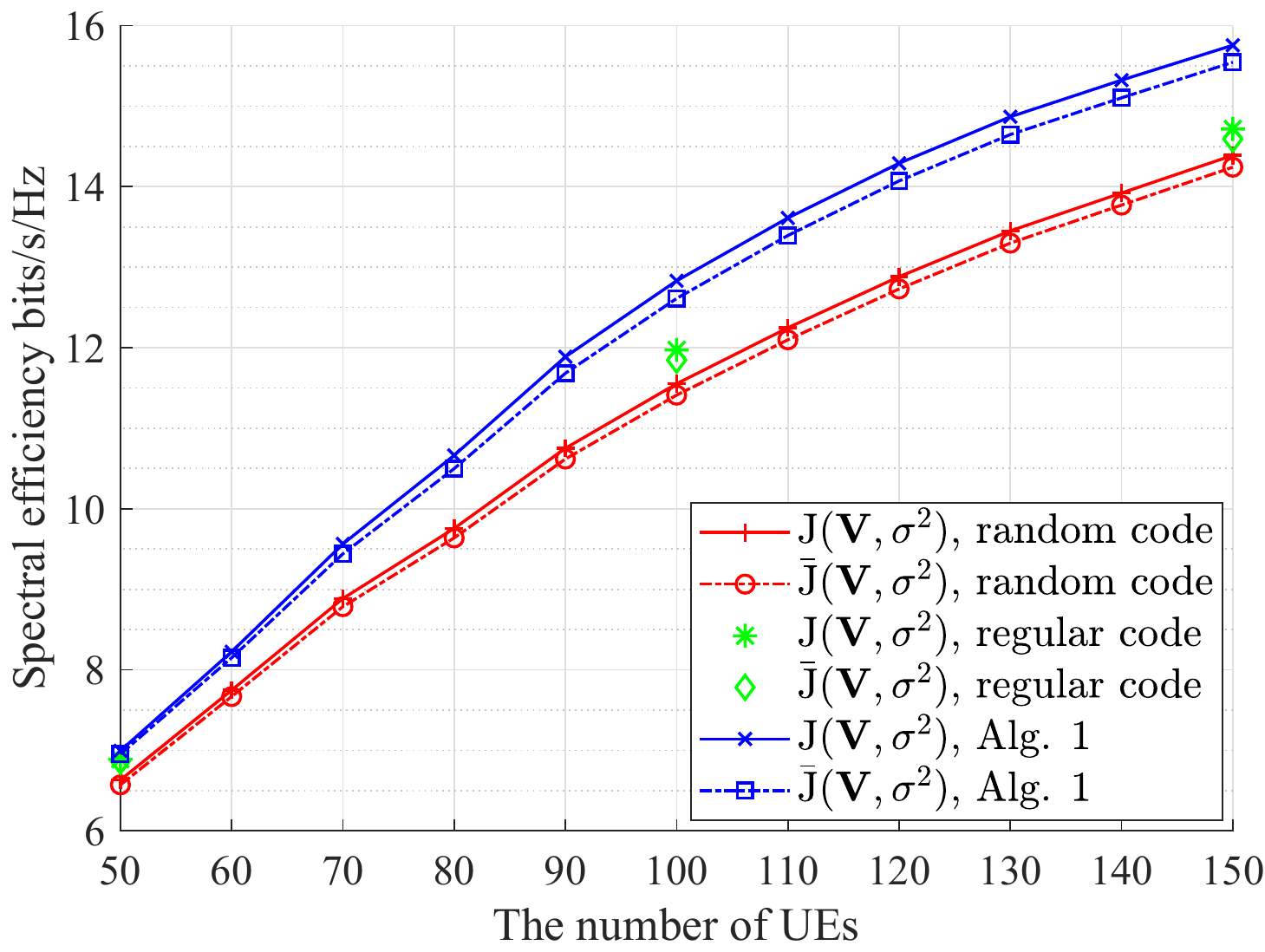}
	\captionof{figure}{The spectral efficiency ${\rm J}({\bf V},\sigma^2)$ vs. the number of UEs,\,$d=2, F=50$.}
	\label{fig:rate vs K d2}
\end{figure}

 In Fig.~\ref{fig:rate vs d}, the attainable SEs are presented versus the number of non-zero elements $d$, for a system load of 300\%. 
It can be seen that 
the performance of the random spreading method  improves as $d$ grows larger.
Spreading on more sub-channels allows UEs to attain interference diversity. This, in general, reduces the loss imposed by the uncoordinated resource allocation.
Note that, even though one can enhance the spectral efficiency of the uncoordinated method by spreading on further sub-channels, the number of UEs overlapping on the same sub-channel increases correspondingly. In a system with 300\% load, the average number of overlapping UEs in the case with $d=2$ and $d=6$ is equal to $6$ and $18$, respectively. Thus, the detection complexity is greatly increased with larger values of $d$. In Fig.~\ref{fig:rate vs d}, the spectral efficiency of the dense spreading scheme is also depicted. As mentioned in Section~\ref{sec: opt code asymp}, the dense spreading matrix is a solution of the optimization problem in~\eqref{eq:Opt Asymp Rewriten}, and thus,  the values of deterministic $\bar{\rm J}({\bf V},\sigma^2)$ for both the dense spreading scheme and the sparse spreading in Alg.~\ref{alg:partition} are the same. However, the values of the exact EMI ${\rm J}({\bf V},\sigma^2)$ for the sparse spreading in Alg.~\ref{alg:partition} are better of that in the dense spreading scheme by an amount of $\epsilon$. This can be seen from Fig.~\ref{fig:rate vs d} where the curves of $\bar{\rm J}({\bf V},\sigma^2)$ and ${\rm J}({\bf V},\sigma^2)$ are almost overlapping for the dense spreading  case, while the value of ${\rm J}({\bf V},\sigma^2)$ for Alg.~\ref{alg:partition} at $d=2$ is nearly 0.2 bits/s/Hz higher than the deterministic EMI. This additional gain in  ${\rm J}({\bf V},\sigma^2)$ for sparse spreading as compared to dense spreading was referred to as sparsity gain in Section~\ref{sec:asymp in finit}. Observe that the sparsity gain decreases as the number of non-zero elements in the codes increases. Finally, we observe that, in contrast to the symmetric model\footnote{See Section~\ref{sec:asymp in finit} for definition of symmetric system model}, the regular spreading method is inferior to Alg.\ref{alg:partition} and even to the dense spreading scheme
 in the considered asymmetric scenario. As mentioned in Section~\ref{sec:asymp in finit}, the regular spreading matrix is an optimal solution to the optimization problem in~\eqref{eq:Opt Asymp Rewriten} in the symmetric scenarios.
    \begin{figure}[t!]
	\centering
	\includegraphics[width=0.7\linewidth]{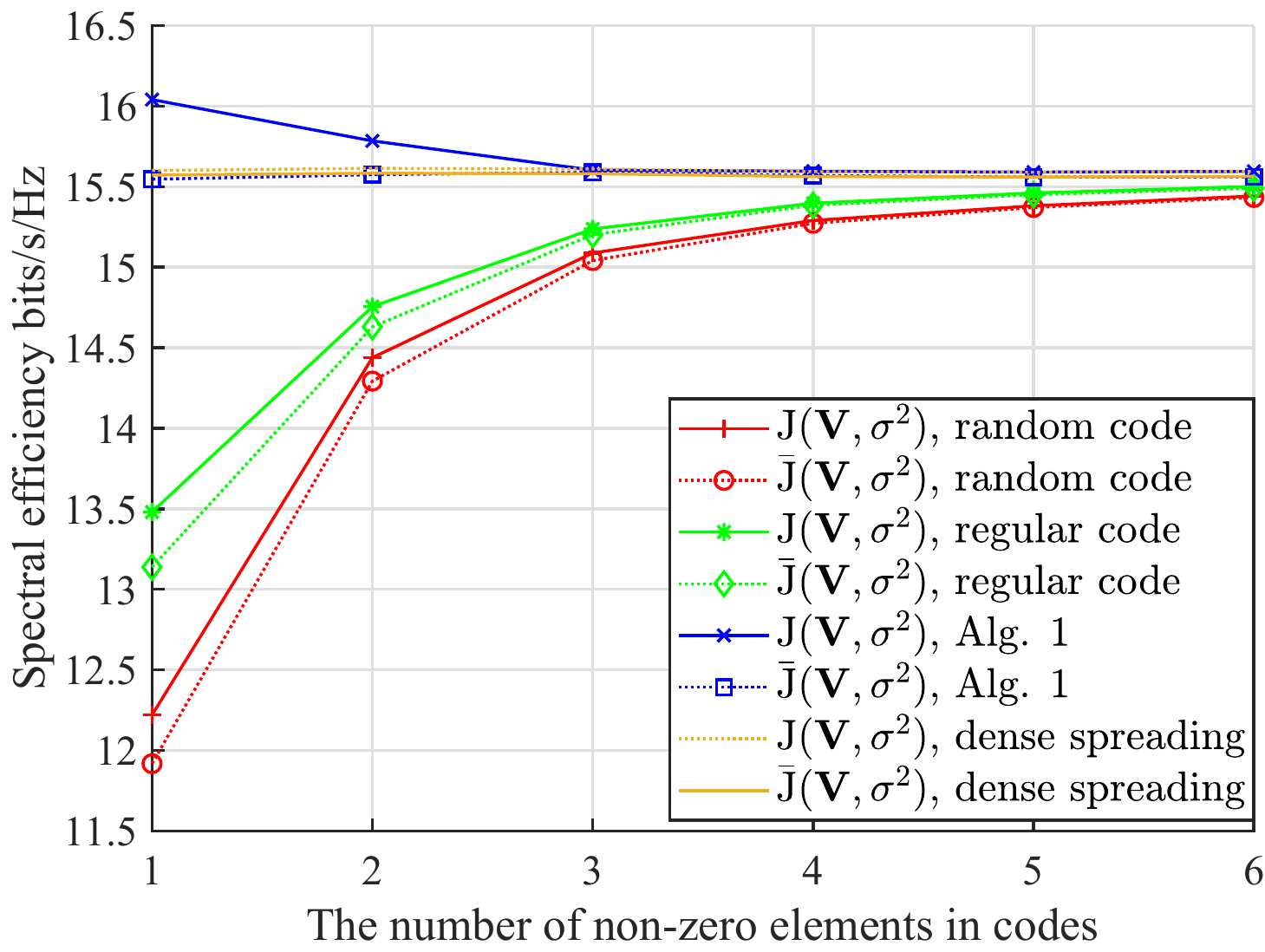}
	\vspace{-0.3cm}
	\captionof{figure}{The spectral efficiency ${\rm J}({\bf V},\sigma^2)$ vs.  $d$, $F=50$, $\frac{K}{F}=3$.}
	\label{fig:rate vs d}
\end{figure}

\subsection{Visualization of the resource allocation strategy in the proposed method}
In Fig.~\ref{fig:alloc}, the resource allocation strategy in Alg.~\ref{alg:partition} 
is visualized in a scenario with $F=30$, $K=90$, and $d=2$. This figure illustrates the allocation of the spreading codes in Alg.~\ref{alg:partition} 
for a particular drop of UEs. The colorbar represents the unitless $\beta_k$ values, introduced in~\eqref{eq:system of underdetermined eq}.
It can be seen that the power fragments are allocated to the sub-channels such that the sum terms $\eta_f=\sum_{k\in\mathcal{K}} {\beta_k v_{f,k}}$ become equal. Observe also that the UEs with large $\beta_k$ values are distributed across the sub-carriers. Then, those with smaller $\beta_k$ values are placed such that the sum terms $\eta_f=\sum_{k\in\mathcal{K}} {\beta_k v_{f,k}}$  become equalized. 
It can be seen from Fig.~\ref{fig:alloc} that the UEs overlapping on each sub-carriers have diverse $\beta_k$ values. 

   \begin{figure}[t!]
	\centering
	\includegraphics[width=0.6\linewidth]{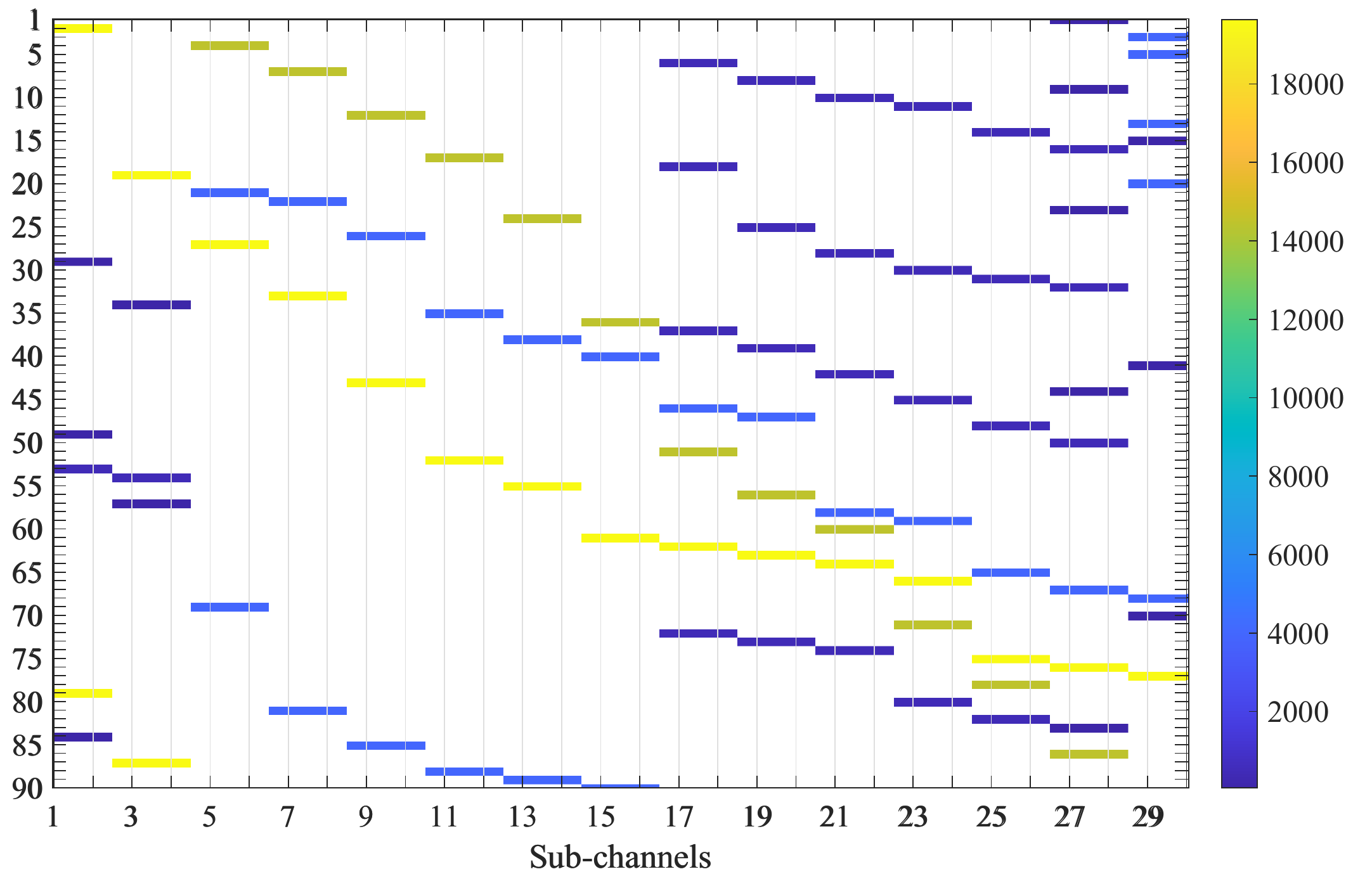}
	\captionof{figure}{Visualization of the resource allocation strategy in Alg.~\ref{alg:partition}, $F=30$, $K=90$, $d=2$.}
	\label{fig:alloc}
\end{figure}

\section{Conclusions}
\label{sec:Conclusion}
In this paper, a simple and efficient rule for close-to-optimal allocation of sparse spreading codes was derived based on rigorous analysis.
   The analysis reduced the dilemma of maximizing the ergodic mutual information to a partitioning problem, which was solved via an efficient algorithmic solution.
The proposed algorithm  allocates the spreading codes based on the system load, the sparsity constraints and pathloss values. 
The simulation results showed that the proposed algorithm with minimal coordination provides a superior performance as compared to the uncoordinated random spreading and the coordinated regular spreading schemes. It was shown that the regular spreading matrices are asymptotically optimal only in the symmetric system model, while in the asymmetric case the performance of the regular spreading method is even inferior to the dense spreading scheme.
As the future work, we are interested in extending the results to the multi-antenna BS scenario with  correlation introduced among the antenna elements and among the sub-carriers. Following the same optimization approach as in here, we expect  
the optimal low density spreading policies to be characterized based on the spatial and the spectral  correlation properties of the channel~matrix. 
\appendices

\section{Proof of Theorem~\ref{th:cap conv}}
\label{sec:prof of the 1}
The ergodic mutual information ${\mathrm J}_F ({\bf W}_F,\sigma^2)$ is related to $\frac{1}{F} \mathbb{E} \tr {(\frac{1}{\sigma^2}{\bf H}_{F} {\bf H}_{F} \herm+{\bf I}_{F})}^{-1}$ as follows
\begin{equation}
\label{eq:I deriv Tr}
\frac{\partial {\mathrm J}_F ({\bf W}_F,\sigma^2)}{\partial \sigma^2}=\frac{1}{\sigma^2F}\mathbb{E}{\mathrm{Tr}}(\frac{1}{\sigma^2}{\bf H}_{F} {\bf H}_{F} \herm+ {\bf I}_{F})^{-1}-\frac{1}{\sigma^2}.
\end{equation}
Denoting ${\bf{Q}}_{F}(\sigma^2)=(\frac{1}{\sigma^2}{{\mathbf{H}}_{F}} {{\bf H}_{F}}\herm+\mathbf{I}_{F})^{-1}$, equivalently we have 
\begin{equation}
\label{eq:intg tr}
{\mathrm J}_{F} ({\bf W}_F,\sigma^2)=\int_{\sigma^2}^{+\infty}\big( \frac{1}{ z}-\frac{1}{zF}\mathbbm{E}{\tr} {\bf Q}_{F}(z) \big)dz.
\end{equation}
Thus, given an expression for the trace term $\mathbbm{E}{\tr}{\bf Q}_{F}(z), \forall z\in\mathbb{R}^+$, one can equivalently derive an expression for EMI based on~\eqref{eq:intg tr}. 
In derivation of the results, we utilize so-called Gaussian method~\cite{RMT,Hachem07}.
Let ${\bf x} \sim \mathcal{CN}({\bf 0},{\sbf \Theta})$ be a circularly symmetric Gaussian random vector with covariance matrix ${\sbf \Theta}\in\mathbb{C}^{N\times N}$. Also, let the  function $f({\bf x},{\bf x}^*)\in \mathbb{C}$ together with its derivatives be polynomial bounded.
The Gaussian method consists of two ingredients~\cite{Hachem07}: 
\begin{itemize}
	\item Integration by parts formula: $
	\mathbb{E}x_j f({\bf x})= \sum_{i=1}^{N}[{\sbf \Theta}]_{j,i}  \mathbb{E}\frac{\partial f({\bf x})}{\partial x_i^*}$,
	\item Nash-Poincar\'{e} inequality: $
	{\var}(f({\bf x}))\leq \mathbb{E}\nabla_{\bf x}f({\bf x})\tran{\sbf \Theta} \big(\nabla_{\bf x}f({\bf x})\big)^*+ 
	\mathbb{E}\big(\nabla_{{\bf x}^*}f({\bf x})\big)\herm{\sbf \Theta} \nabla_{{\bf x}^*}f({\bf x})$.
\end{itemize}
In the sequel, we show that  a diagonal element $q_{p,p}$ of matrix ${\bf{Q}}_F$ can be written as the sum of  ${h}_{i,j} q_{p,i}  {h}_{p,j}^*$ terms.
Defining the function $f({\bf H}_F)\triangleq q_{p,i}{h}_{p,j}^*$, then, we expand the terms of  type $\mathbbm{E} {h}_{i,j}f({\bf H}_F)$ using the Gaussian integration by parts formula. This retrieves an
implicit but deterministic expression for $\mathbbm{E}\tr({\bf Q}_{F})$ up to a small residual term. Then, Nash-Poincar\'{e} inequality allows us to derive an upper bound on the residual term and declares the convergence of the trace term under the realms of Theorem~\ref{th:cap conv} as 
\begin{equation}
\label{eq:tr q in proof}
\frac{1}{F}\mathbbm{E}{\tr} {\bf Q}_{F}(\sigma^2)=\frac{1}{F}{\tr}{\bf R}_{F}({\bf W}_F,\sigma^2)+\zeta_{F}(\sigma^2)
\end{equation}
where ${\bf R}_{F}({\bf W}_F,\sigma^2)$ is as defined in the theorem, and $\zeta_F(\sigma^2)=\mathcal{O}(\frac{1}{d^2})$ is a fast diminishing term.
Given the deterministic equivalent for $\frac{1}{F}\mathbbm{E}{\tr} {\bf Q}_{F}(\sigma^2)$ as in~\eqref{eq:tr q in proof}, the convergence of EMI ${\mathrm J}_{F} ({\bf W}_F,\sigma^2)$ to the deterministic equivalent $\bar{\mathrm J}_{F} ({\bf W}_F,\sigma^2)$, as declared in Theorem~\ref{th:cap conv}, follows from the relation in~\eqref{eq:intg tr}. In the following, we proof the convergence of the trace term in~\eqref{eq:tr q in proof}. The proof of the convergence of EMI is straightforward, and thus, is omitted. We invite the reader to refer to~\cite[Theorem 1]{Hachem07} for the details. In the sequel, we frequently omit the subscript $F$ denoting the dependency of the entities on the system size.
In proving the convergence of the trace term, we frequently use the following elementary results,
\begin{align}
\label{eq:deriv Q}
\frac{\partial q_{p,d}}{\partial {h}_{i,j}}&=-[{\bf Q}( \frac{\partial\bf Q^{-1}}{\partial {h}_{i,j}}  ){\bf Q}]_{p,d}=-\frac{1}{\sigma^2}[{\bf Q}( \frac{\partial{{\bf H}{\bf H}}\herm}{\partial {h}_{i,j}}  ){\bf Q}]_{p,d}\\
&=-\frac{1}{\sigma^2} q_{p,i} [ { {\bf h}_j\herm \bf Q}]_{d},
\end{align}
and similarly,
\begin{align}
\label{eq:deriv Q tild}
\frac{\partial q_{p,d}}{\partial {h}_{i,j}^*}=-\frac{1}{\sigma^2}[ {\bf Q} {\bf h}_j]_{p}\, q_{i,d}.
\end{align}

We start by noticing that ${\bf Q}={\bf I}_F-\frac{1}{\sigma^2}{\bf Q} {\bf H} {\bf H}\herm$, a relation often referred to as the
resolvent identity. This allows one to write $\mathbbm{E}q_{p,p}$ as a function of $\mathbbm{E}[{\bf Q} {\bf H} {\bf H}\herm]_{p,p}$, i.e.,
\begin{equation}
\label{eq:resolvant id}
\mathbbm{E}{q}_{p,p}=1-\frac{1}{\sigma^2}\sum_{j=1}^{K}\mathbbm{E}[ {\bf Q} {\bf h}_j]_{p} {h}_{p,j}^*,\,\, p\in \mathcal{F}.
\end{equation}
Now, we work on term $\mathbbm{E}[ {\bf Q} {\bf h}_j]_{p} {h}_{p,j}^*=\sum_{i=1}^{F}\mathbbm{E}\, q_{p,i} {h}_{i,j} {h}_{p,j}^*$  to expand it using the Gaussian integration by parts formula. Let the function $f({\bf H})$ to be defined as $f({\bf H})\triangleq q_{p,i}{h}_{p,j}^*$. Then from the integration by parts formula, we get 
 \begin{subequations}
 	\label{eq:integ by part eqp term}
	\begin{align}
\mathbbm{E} {h}_{i,j}q_{p,i}{h}_{p,j}^*&=\mathbbm{E} {h}_{i,j}f({\bf H})= a_{j}^2 v_{i,j} \mathbbm{E}\frac{\partial f({\bf H})}{\partial {h}_{i,j}^*}\\
&=a_{j}^2 v_{i,j} \mathbbm{E}(\frac{\partial q_{p,i}}{\partial {h}_{i,j}^*} {h}_{p,j}^*+\frac{\partial{{h}}_{p,j}^* }{\partial {h}_{i,j}^*} q_{p,i})\\
&=-\frac{1}{\sigma^2}a_{j}^2 v_{i,j} \mathbbm{E}[ {\bf Q} {\bf h}_j]_{p}\, q_{i,i} {h}_{p,j}^*+a_{j}^2 v_{i,j} \delta(i-p)\mathbbm{E}q_{p,i}
\end{align}
\end{subequations}
where ${v}_{i,j}\triangleq \frac{1}{d_{j}} w_{i,j}^2$. Summing the sides of above equality over index $i$, we get
\begin{equation}
\mathbbm{E}[ {\bf Q} {\bf h}_j]_{p} {h}_{p,j}^*=
-\frac{1}{\sigma^2} \mathbbm{E}[ {\bf Q} {\bf h}_j]_{p} {h}_{p,j}^*   {a_{j}^2} {\tr}( {\bf Q} {\bf{V}}_{j} )
+a_{j}^2 v_{p,j}\mathbbm{E}q_{p,p}
\end{equation}
where  ${\bf V}_j=\diag\{{\bf v}_{j} \}$ with ${\bf v}_l$ defined as ${\bf v}_l\triangleq[{ v}_{1,l},...,{ v}_{F,l}],\,\forall l$.
Let us define $\beta_{j}={a_{j}^2} {\tr}( {\bf Q} {\bf V}_{j} )$, $\alpha_j=\mathbbm{E}\beta_j$ and $\overset{o}{\beta_j}=\beta_j-\alpha_j$. Then, the terms in the above equation can be separated~as
\begin{equation}
\mathbbm{E}[ {\bf Q} {\bf h}_j]_{p} {h}_{p,j}^*=
-\frac{1}{\sigma^2}  \alpha_j \mathbbm{E}[ {\bf Q} {\bf h}_j]_{p} {h}_{p,j}^*  
+a_{j}^2 v_{p,j}\mathbbm{E}q_{p,p} -\frac{1}{\sigma^2} \mathbbm{E}[ {\bf Q} {\bf h}_j]_{p} {h}_{p,j}^* \overset{o}{\beta_j}.
\end{equation}
Solving the above equality for $\mathbbm{E}[ {\bf Q} {\bf h}_j]_{p} {h}_{p,j}^*$, we get
\begin{equation}
\label{eq: tilde e formula}
\mathbbm{E}[ {\bf Q} {\bf h}_j]_{p} {h}_{p,j}^*=\tilde{e}_j a_{j}^2 v_{p,j}\mathbbm{E}q_{p,p}-\tilde{e}_j \frac{1}{\sigma^2}  \mathbbm{E}[ {\bf Q} {\bf h}_j]_{p} {h}_{p,j}^* \overset{o}{\beta_j}
\end{equation}
where $\tilde{e}_j\triangleq \frac{\sigma^2}{\sigma^2+{\alpha_j}}$. Summing the sides of above equality over index $j$, we get
\begin{equation}
\label{eq:QHH expanded}
\mathbbm{E}[ {\bf Q} {\bf H}{\bf H}\herm]_{p,p} ={\tr}({\sbf{ \Lambda}}_{p}{\bf A}\tilde{\bf E})\mathbbm{E}q_{p,p}-\frac{1}{\sigma^2} \sum_{j=1}^{K}\tilde{e}_j \mathbbm{E}  [ {\bf Q} {\bf h}_j]_{p} {h}_{p,j}^* \overset{o}{\beta_j}
\end{equation}
where ${\sbf \Lambda}_{p}=\diag\{v_{p,_1},...,v_{p,_K}\}$, ${\bf A}=\diag\{a_1^2,...,a_K^2\}$, $\tilde{\bf E}=\diag\{\tilde{e}_1,...,\tilde{e}_{K}\}$,  and ${\overset{o}{\boldsymbol{\beta}}}=\rm{diag}\{\overset{o}{\beta_1},...,\overset{o}{\beta_{K}}\}$. Utilizing the resolvent identity,~\eqref{eq:resolvant id} and~\eqref{eq:QHH expanded} yield
\begin{equation}
\label{eq:qpp eq1}
{\sigma^2-\sigma^2\mathbbm{E}q_{p,p}}=\tilde{\alpha}_p\mathbbm{E}q_{p,p}-{\frac{1}{\sigma^2}}\sum_{j=1}^{K}\tilde{e}_j \mathbbm{E}  [ {\bf Q} {\bf h}_j]_{p} {h}_{p,j}^* \overset{o}{\beta_j}
\end{equation}
where $\tilde{\alpha}_p={\tr}({\sbf \Lambda}_{p}{\bf A}\tilde{\bf E})$. Now, one can solve~\eqref{eq:qpp eq1} to obtain $\mathbbm{E}q_{p,p}$. Let $\bf T$ be a $F\times F$ diagonal non-negative matrix with bounded spectral norm. Multiplying the acquired $\mathbbm{E}q_{p,p}$ from~\eqref{eq:qpp eq1} by elements of $\bf T$,  and summing  over $p$ yields
\begin{equation}
\label{eq:TrDQ plus var}
\frac{1}{F} \tr({\bf T}\bf{ Q})=\frac{1}{F}\tr({\bf T}\bf{ E})+{\frac{1}{\sigma^4 F}\sum_{j=1}^{K} \tilde{e}_j\mathbbm{E}   {\bf h}_{j}\herm {\bf T}{\bf E} {\bf Q} {\bf h}_j\overset{o}{\beta_j}}
\end{equation} 
where ${\bf E}=\diag\{{e}_1,...,{e}_{K}\}$ with $e_p\triangleq \frac{\sigma^2}{\sigma^2+\tilde{\alpha}_p}$. Next in Appendix~\ref{sec:convergence analys}, we prove that the last term in the above equation, hereafter denoted by $\zeta(\sigma^2)$,  vanishes with $\mathcal{O}(\frac{1}{d^2})$ rate. As the result, we get the convergence $\frac{1}{F} \tr({\bf T}\bf{ Q})-\frac{1}{F}\tr({\bf T}\bf{ E})\rightarrow 0$. However, notice that the term $\tr({\bf T} \bf{E})$ still depends on the unknown parameters $\alpha_{j}={a_{j}^2}\mathbb{E} {\tr}( {\bf V}_{j} {\bf Q} )$. Therefore, in the last step of the proof given in Appendix~\ref{sec:approx of E}, we need to show that the matrix $\bf E$ can be replaced by the deterministic matrix $\bf R$, as defined in the theorem.

\subsection{The upper-bound on $\zeta(\sigma^2)$ term}
\label{sec:convergence analys}
 In proving the convergence rate, we first derive an upper-bound for $|\zeta(\sigma^2)|$ in terms of the variance of $\tr(\vec{Q})$. Then, we show that the upper-bound vanishes with $\mathcal{O}(\frac{1}{d^2})$ rate. observing that $\beta_{j}={a_{j}^2} {\tr}( {\bf V}_{j}{\bf Q}  )$, the $\zeta(\sigma^2)$ term, given as the last term in~\eqref{eq:TrDQ plus var}, can be presented as in the following
 \begin{subequations}
 	\label{eq:zeta expand}
 	\begin{align}
 	|\zeta(\sigma^2)|&= \frac{1}{\sigma^4 F}\bigg|\sum_{i=1}^{F}\mathbbm{E}\, \overset{o}{q}_{i,i} \sum_{j=1}^{K} v_{i,j} a_j^4  \tilde{e}_j {\bf h}_{j}\herm {\bf T}{\bf E} {\bf Q} {\bf h}_j  \bigg|\\\label{eq:zeta expand1}
 	&\leq \frac{1}{\sigma^4 F}\sum_{i=1}^{F}\bigg|\mathbbm{E}\, \overset{o}{q}_{i,i} \sum_{j=1}^{K} v_{i,j} a_j^4  \tilde{e}_j {\bf h}_{j}\herm {\bf T}{\bf E} {\bf Q} {\bf h}_j  \bigg|\\\label{eq:zeta expand2}
 	 	&\leq \frac{1}{\sigma^4 d F}\sum_{i=1}^{F}\bigg|\mathbbm{E}\|\tilde{\vec{E}}\|\|\vec{A}\|^2\, \overset{o}{q}_{i,i} \sum_{j=1}^{K} {\bf h}_{j}\herm {\bf T}{\bf E} {\bf Q} {\bf h}_j  \bigg| 
 	\end{align}
 \end{subequations}
where $\overset{o}{q}_{i,i}={q}_{i,i}-\mathbbm{E}{q}_{i,i}$, $\vec{A}=\diag\{a_1^2,...,a_K^2\}$, and~\eqref{eq:zeta expand1} is due to the triangle inequality. Let $\phi\triangleq\tr(\vec{Q})$, then, using the resolvent identity we get
 \begin{subequations}
	\label{eq:Second zeta expand}
	\begin{align}
	|\zeta(\sigma^2)|&\leq \frac{C}{d F}\sum_{i=1}^{F}\bigg|\mathbbm{E}\big(\, \overset{o}{q}_{i,i} \tr\big(\vec{TE}(\vec{I}_F-\vec{Q})\big) \big) \bigg|\label{eq:Second zeta expand1}\\
        &\leq \frac{C}{ d F}\var(\phi)\label{eq:Second zeta expand3}
	\end{align}
\end{subequations}
where $C$ is a generic constant independent of the system size, and~\eqref{eq:Second zeta expand1}  follows since the matrices $\tilde{\vec{E}},\vec{A},\vec{T},\vec{E}$ have bounded spectral norms.
In~\eqref{eq:Second zeta expand3} we used the fact that $\overset{o}{q}_{i,i}$ terms are zero mean, where then, the positive correlations among the entries
yields the last inequality. Next, we use Nash-Poincar\'{e} inequality to find an upper-bound for the variance of $\phi=\tr(\vec{Q})$.
 In particular, observing that $|\frac{\partial \phi}{\partial {h}_{i,j}}|=|\frac{\partial \phi}{\partial {h}_{i,j}^*}|$, Nash-Poincar\'{e} inequality yields,
\begin{subequations}
	\label{eq:var beta}
\begin{align}
\var({\phi})&\leq 2  \sum_{i=1}^{F}\sum_{j=1}^{K} a_{j}^2 v_{i,j} \mathbb{E} \big| \frac{\partial { {\tr}( {\bf Q}  )}}{\partial {h}_{i,j}}\big|^2\\
&= 2  \sum_{i=1}^{F}\sum_{j=1}^{K} a_{j}^2 v_{i,j} \mathbb{E} \bigg| \sum_{p=1}^{F}\frac{\partial q_{p,p}}{\partial {h}_{i,j}}\bigg|^2\\
&= 2  \sum_{i=1}^{F}\sum_{j=1}^{K} a_{j}^2 v_{i,j}\mathbb{E} \big|\frac{1}{\sigma^2}  [ { {\bf h}_j\herm \bf Q}{\bf Q}]_i\big|^2\\
&= \frac{2}{\sigma^4}\sum_{j=1}^{K}a_{j}^2 \mathbb{E}   { {\bf h}_j\herm \bf Q}{\bf Q}  {\bf V}_j {\bf Q} {\bf Q} {\bf h}_j\\
&\leq \mathbb{E}\frac{2 \|{\bf A}\|^2 \|{\bf Q}\|^4}{\sigma^4 d} \tr(\vec{HH}\herm)\\
&\leq C\frac{K}{d}. \label{eq:var beta x}
\end{align}
\end{subequations}
The inequality in~\eqref{eq:var beta x} follows since from the resolvent identity one can verify that $\|\vec{Q}\|\leq 1$. Also, it is can be verified that $\tr(\vec{HH}\herm)\leq K$.
Finally, putting the results from~\eqref{eq:Second zeta expand} and~\eqref{eq:var beta} together, we get $|\zeta(\sigma^2)|\leq \frac{C}{ d^2}$.

\subsection{Replacing the matrix $\vec{E}_F$ by the deterministic matrix $\vec{R}_F$ } 
\label{sec:approx of E}
In the last step of the proof, we need to show that the matrix $\bf E_F({\bf V}_F,\sigma^2)$ can be replaced by the deterministic matrix  ${\bf R}_F({\bf V}_F,\sigma^2)=\diag\{r_p({\bf V}_F,\sigma^2), \forall p\in[1,..., F]\}$ where the $r_p$ values are given as the unique positive solution of the following system of equations 
	\begin{align}
	\label{eq:dif of r and delta}
	&r_p= \frac{\sigma^2}{\sigma^2+\tilde{\delta}_p}, &&\tilde{\delta}_p={\tr}({\sbf \Lambda}_{p}^{(F)}{\bf A}_F\tilde{\bf R}_F ),&&&p=1,..., F,\\
	&\tilde{r}_j=\frac{\sigma^2}{\sigma^2+\delta_j}, &&\delta_{j}={a_{j}^2} {\tr}( {\bf V}_{j}^{(F)} {\bf R}_F  ),&&&j=1,..., K
	\end{align}
	where the matrices ${\sbf \Lambda}_{p}^{(F)}$, ${\bf A}_F$, and ${\bf V}_j^{(F)}$  are as defined in~\eqref{eq:QHH expanded}. The superscript and subscript $F$ denote the dependency of the entities on the system size.   The matrix $\tilde{\bf R}_F$ is defined as $\tilde{\bf R}_F  ({\bf V}_F,\sigma^2)=\diag\{\tilde{r}_j({\bf V}_F,\sigma^2),\forall j\in[1,..., K]\}$.
Let us define $\vec{T}_F$ and ${\bf B}_F$  to be $ F\times  F$ and $ K\times  K$  diagonal deterministic matrices with uniformly bounded spectral norm. Then, we show that under the assumptions in the theorem, the following holds for every $\sigma^2\in \mathbb{R}^+$
	\begin{equation}
	\begin{aligned}
	\frac{1}{ F}\tr({\bf T}_F \bf{ E}_F({\bf V}_F,\sigma^2) )&=\frac{1}{ F}\tr({\bf T}_F \bf{ R}_F({\bf V}_F,\sigma^2) )+\mathcal{O}(\frac{1}{d^2}),\\
	\frac{1}{ F}\tr({\bf B}_F \tilde{\bf E}_F({\bf V}_F,\sigma^2) )&=\frac{1}{ F}\tr({\bf B}_F  \tilde{\bf R}_F({\bf V}_F,\sigma^2) )+\mathcal{O}(\frac{1}{ d^2}).
	\end{aligned}
	\end{equation}
	In the following, the dimension-superscript and subscript $F$ are omitted.
In proving the results we need to develop a well-quantified bound on the difference of the trace term  $\frac{1}{ F}\tr ({\bf T} ({\bf E} -{\bf R} ))$ as the dimensions grow large. First, by a mere development, we have
\begin{equation}
\label{eq:trDER bound}
\begin{aligned}
\frac{1}{ F}\big|\tr ({\bf T}({\bf E}-{\bf R}))\big|&=
\frac{1}{ F}\big|\tr ({\bf T}{\bf E}({\bf R}^{-1}-{\bf E}^{-1}){\bf R})\big|\\
&=\frac{1}{\sigma^2 F}\big|\sum_{p=1}^{ F} t_p e_p r_p(\tilde{\delta}_p-\tilde{\alpha}_p)\big|\\
&\leq \frac{\|{\bf T}\|}{\sigma^2 F}\sum_{p=1}^{ F} \big|\tilde{\delta}_p-\tilde{\alpha}_p\big|
\end{aligned}
\end{equation}
where the last equality follows from the upper bounds $\|{\bf R}\|\leq 1$ and $\|{\bf E}\|\leq 1$, implied by the definitions of $\bf R$ and $\bf E$ in~\eqref{eq:dif of r and delta} and~\eqref{eq:TrDQ plus var}, respectively. Under the same arguments, the term  $|\tilde{\delta}_p-\tilde{\alpha}_p|$ can be bounded as
\vspace{-0.5cm}
\begin{equation}
\label{eq:proof 1}
\begin{aligned}
|\tilde{\delta}_p-\tilde{\alpha}_p|&=\frac{1}{\sigma^2}\bigg|\sum_{j=1}^{ K}a_{j}^2  v_{p,j} (\alpha_j-\delta_j) \tilde{e}_j \tilde{r}_j\bigg|\\
&{\leq} \frac{1}{\sigma^2}\sum_{j=1}^{ K}a_{j}^2  v_{p,j} |\alpha_j-\delta_j| .
\end{aligned}
\end{equation} 
On the other hand, expanding $|\alpha_j-\delta_j|$, we have
\begin{equation}
\label{eq:proof 2}
\begin{aligned}
|\alpha_j-\delta_j|&={a_{j}^2}\big| \mathbb{E}\tr({\bf V}_{j}{\bf Q})-\sum_{p=1}^{ F} \frac{v_{p,j}}{1+\tilde{\delta}_p/\sigma^2}   \big|\\
&\overset{(a)}{=}{a_{j}^2} \big|\sum_{p=1}^{ F} v_{p,j}\big( \frac{1}{1+\tilde{\alpha}_p/\sigma^2} - \frac{1}{1+\tilde{\delta}_p/\sigma^2}   \big)\big|+\mathcal{O}(\frac{1}{d^2})\\
&\overset{(b)}{\leq} \frac{a_{j}^2}{\sigma^2} \sum_{p=1}^{ F} v_{p,j}|\tilde{\delta}_p-\tilde{\alpha}_p|+\mathcal{O}(\frac{1}{d^2})
\end{aligned}
\end{equation}
where equality (a) follows from~\eqref{eq:TrDQ plus var}, and (b) is given due to the bounds on the spectral norm of $\bf R$ and $\bf E$. 
These inequalities together yield,
\begin{equation}
\label{eq:temp proof tr 1}
\begin{aligned}
\frac{1}{ K}\sum_{j=1}^{ K}|\alpha_j-\delta_j|&\overset{\eqref{eq:proof 2}}{\leq}   \frac{\|{\bf A}\|}{\sigma^2 d}   \sum_{p=1}^{ F}
|\tilde{\delta}_p-\tilde{\alpha}_p|+\mathcal{O}(\frac{1}{d^2})\\
&\overset{\eqref{eq:proof 1}}{\leq} \frac{\|{\bf A}\|^2}{\sigma^4 d}   \sum_{p=1}^{ F} \sum_{j=1}^{ K}  v_{p,j} |\alpha_j-\delta_j|+\mathcal{O}(\frac{1}{d^2})\\
&=   \frac{P\|{\bf A}\|^2}{\sigma^4 d}    \sum_{j=1}^{ K}   |\alpha_j-\delta_j|+\mathcal{O}(\frac{1}{d^2})
\end{aligned}
\end{equation}
where the last equality follows since $ \sum_{p=1}^{ F} v_{p,j}\leq P$ where $P=\max\{P_k,\forall k\in\mathcal{K}\}$. Let us do a change of variable as $z=1/\sigma^2$ and express all the related functions in terms of $z$.
 Then, from the inequality in~\eqref{eq:temp proof tr 1}, we get
\begin{equation}
\label{eq:proof bound on delta minus alpha}
\begin{aligned}
(1-\frac{K}{d}{z^2 P\|{\bf A}\|^2})\frac{1}{ K}\sum_{j=1}^{ K} |{\alpha}_j-{\delta}_j|= \mathcal{O}(\frac{1}{d^2}).\\
\end{aligned}
\end{equation}
This inequality ensures that there exist a $z_0$ value such that $(1-\frac{K}{d}{z^2P\|{\bf A}\|^2})>0$. This further implies that $\frac{1}{ K}\sum_{j=1}^{ K} |{\alpha}_j-{\delta}_j|= \mathcal{O}(\frac{1}{d^2})$ holds for $z\leq z_0$. Thus, once we prove that  the above inequality holds for all $z\in\mathbb{R}^+$, the convergence of the trace term to the deterministic equivalent can be claimed. To do so, we consider $\alpha_{j}={a_{j}^2}\mathbb{E} {\tr}( {\bf V}_{j} {\bf Q}  )$ as a function in $z$ with extended domain from $z\in\mathbb{R^+}$ to $z\in\mathbb{C}\backslash\mathbb{R}^-$. It can be shown that the following integral representation for $\alpha_j(z)$ holds (see \cite[Appendix A]{DupuyLoubaton2011}),
\begin{equation}
\alpha_j(z)=\int_{0}^{+\infty}\frac{\mu_j(d\lambda)}{1+z\lambda}
\end{equation}
where $\mu_j$ is a uniquely defined positive measure on $\rr{+}$ such that $\mu_j(\rr{+})={a^2_{j}}\tr {\bf{V}}_j$. 
Based on the properties of Stieltjes transform~\cite[Theorem 3.2]{RMT}, $\alpha_j(z)$ can be upper-bounded as
\begin{equation}
\begin{aligned}
\alpha_j(z)\leq {a^2_{j}}\tr {\bf{V}}_j \frac{1}{|z|}\frac{1}{\rm{dist}(-\frac{1}{z},\rr{+})}\\
\leq \|\bf{A}\|\frac{1}{|z|}\frac{1}{\rm{dist}(-\frac{1}{z},\rr{+})}.
\end{aligned}
\end{equation}
Similarly,  a bound on $\delta_k(z)$ can be developed using integral representation.
This analysis shows that the functions $\delta_k(z)$ and $\alpha_k(z)$ belongs to the class of Stieltjes transforms of finite
positive measures carried by $\rr{+}$. Thus, $|\alpha_{k}(z)-\delta_{k}(z)|$ belongs to a family of analytic functions, which are bounded on any compact subset of $\cc\backslash\rr{-}$. As the result, the Vitali's convergence theorem~\cite{RMT} ensures that $\frac{1}{ K}\sum_{j=1}^{ K} |{\alpha}_j-{\delta}_j|$ goes to zero for any $z\in\mathbb{R}^+$ as $F \rightarrow \infty$.  What remains is to show that  the derived convergence rate $\mathcal{O}(\frac{1}{d^2})$ holds for all $z\in\mathbb{R}^+$, which completes the proof. The results follows from straightforward calculus where the reader is invited to refer to~\cite[Appendix C]{DupuyLoubaton2011} for the details about the derivation steps.

\section{Proof of Proposition~\ref{prop:KKT properties}}
\label{sec: Proof of kkt propertis}
The first property in the proposition follows directly since the mutual information is a strictly increasing function of the UEs' powers. One can also drive the same conclusion based on~\eqref{eq:KKT3} and~\eqref{eq:KKT1}. The former implies that $\delta_k\geq \frac{1}{\sigma^2F}a_k^2 \tilde{r}_k r_f >0$, and thus, the latter gives $\sum_{i \in \mathcal{F}} {v}_{i,k}=P_k$ for all UEs. The second property in the proposition can be justified by noticing that~\eqref{eq:KKT3} is satisfied for a UE $k$ only if we set $\delta_k^*= \frac{1}{\sigma^2F}a_k^2 \tilde{r}_k \max \{r_f\}$. Hence, from~\eqref{eq:KKT2}, we observe that UE $k$ transmits only on the sub-channels with largest $r_f$ values. Since, all other UEs also have the same preference, the condition in~\eqref{eq:KKT2} and~\eqref{eq:KKT3} are satisfied only if the UEs assign $v_{f,k}$ values such that $r_f=r,\forall f$. The value $r$ can be shown that is unique, i.e., $r=r^*$ for any solution to~\eqref{eq:Opt Asymp Rewriten}. Assume the matrix ${\vec V}$ to be a solution to~\eqref{eq:Opt Asymp Rewriten} that results in $r_f$ values to be equalized across sub-channels, i.e., $r_f=r,\forall f$. Plugging the ${\vec V}$ entries into~\eqref{eq:rf and rk}, we get
\begin{equation}
\tilde{r}_k=\frac{1}{1+   \frac{a_k^2}{\sigma^2}  \sum_{f\in \mathcal{F}} v_{f,k}^2  {r}},\forall k\in \mathcal{K}
\end{equation}
where, from the first property in the proposition, we have $\sum_{i \in \mathcal{F}} {v}_{i,k}=P_k,\forall k\in \mathcal{K}$, and thus,
\begin{equation}
\tilde{r}_k=\frac{\sigma^2}{\sigma^2+  {P_k} a_k^2    {r}},\forall k\in \mathcal{K}.
\end{equation}
Given the values of $\tilde{r}_k,\,\forall k\in\mathcal{K}$ as above, the value of $r$  can be evaluated from the following system of equations
\begin{equation}
\label{eq: r equation}
r=\big(1+\sum_{k\in\mathcal{K}}\frac{a_k^2  v_{f,k}}{\sigma^2+  {P_k}a_k^2   {r}}   \big)^{-1},\quad\forall f\in\mathcal{F}
\end{equation}
that holds for all $f\in\mathcal{F}$. Inverting the sides of~\eqref{eq: r equation} and summing over all $f\in\mathcal{F}$ results in
\begin{equation}
\frac{F}{r}=F+\sum_{k\in\mathcal{K}}\frac{a_k^2  }{\sigma^2+  {P_k}a_k^2   {r} } \sum_{f\in\mathcal{F}}{ v_{f,k}}
\end{equation}
where from the first property we have $\sum_{i \in \mathcal{F}} {v}_{i,k}=P_k,\forall k\in \mathcal{K}$, which gives $r=r^*$ as in~\eqref{eq:opt r}.


%
%
%
%
%
%



\bibliographystyle{IEEEtran}

\bibliography{Jourbib,Jourbibrew}
%
%
%

\end{document}